\definecolor{ceruleanblue}{rgb}{0.16, 0.32, 0.75}
\definecolor{darkmidnightblue}{rgb}{0.0, 0.2, 0.4}
\definecolor{darkpastelgreen}{rgb}{0.01, 0.75, 0.24}
\definecolor{bleudefrance}{rgb}{0.19, 0.55, 0.91}
\let\c@table\c@figure
\let\c@lstlisting\c@figure
\newcommand{\myparagraph}[1]{\noindent{\textbf{#1.}}}
\newcommand{\ind}[1]{\mathbbm{1}[#1]}
\newcommand{\bs}[1]{\boldsymbol{#1}}
\newcommand{\bv}[1]{\mathbf{#1}}
\newcommand{\Var}{\operatorname{Var}}
\DeclareMathOperator*{\E}{\mathbb{E}}
\DeclareMathOperator*{\R}{\mathbb{R}}
\DeclareMathOperator*{\argmin}{arg\,min}
\DeclareMathOperator*{\argmax}{arg\,max}
\DeclareMathOperator*{\sign}{sign}
\let\norm\relax
\newcommand{\norm}[1]{\|#1\|}
\newtheorem*{rep@theorem}{\rep@title}
\newcommand{\newreptheorem}[2]{%
	\newenvironment{rep#1}[1]{%
		\def\rep@title{\Cref{##1}}%
		\begin{rep@theorem}}%
		{\end{rep@theorem}}}
\newtheorem{theorem}{Theorem}
\newtheorem{lemma}{Lemma}
\newtheorem{fact}[theorem]{Fact}
\newcommand{\algrule}[1][.2pt]{\par\vskip.2\baselineskip\hrule height #1\par\vskip.2\baselineskip}
  \newcommand{\cAAAI}[1]{AAAI\ Conference\ on\ Artificial (AAAI)}
\title{\vspace{-.5em}Weighted Minwise Hashing Beats Linear Sketching for Inner Product Estimation
	}
\author{
    	Aline Bessa$^1$\footnote{Author names are listed in alphabetical order.}\\ {aline.bessa@nyu.edu}
    	\and 
    	Majid Daliri$^1$\\ {daliri.majid@nyu.edu}
    	\and
    	Juliana Freire$^1$\\ {juliana.freire@nyu.edu}
    	\and
    	Cameron Musco$^2$\\ {cmusco@cs.umass.edu}
    	\and
    	Christopher Musco$^1$\\ {cmusco@nyu.edu}
    	 \and
    	A\'{e}cio Santos$^1$\\ {aecio.santos@nyu.edu}
    	\and
    	Haoxiang Zhang$^1$\\ {haoxiang.zhang@nyu.edu}
}
\date{%
	$^1$New York University\\%
	$^2$University of Massachusetts Amherst\\[2ex]%
	\today
}
\begin{document}
\maketitle

\begin{abstract}
    We present a new approach for independently computing compact sketches that can be used to approximate the {inner product} between pairs of high-dimensional vectors. Based on the Weighted MinHash algorithm, our approach admits strong accuracy guarantees that improve on the guarantees of popular \emph{linear sketching} approaches for inner product estimation, such as CountSketch and Johnson-Lindenstrauss projection. Specifically, while our method exactly matches linear sketching for dense vectors, it yields significantly \emph{lower} error  for sparse vectors with limited overlap between non-zero entries. Such vectors arise in many applications involving sparse data, as well as in increasingly popular dataset search applications, where inner products are used to estimate data covariance, conditional means, and other quantities involving columns in \emph{unjoined tables}. We complement our theoretical results by showing that our approach empirically outperforms existing linear sketches and unweighted hashing-based sketches for sparse vectors.
\end{abstract}

\section{Introduction}
The \textit{inner product} of two vectors $\bv{a}$ and $\bv{b}$,
$\langle \bv{a}, \bv{b} \rangle = \sum_{k=1}^n \bv{a}[k]\bv{b}[k]$, is a ubiquitous operation. Among many other applications, inner products can be used to compute document similarities~\cite{SaltonWongYang:1975}, to evaluate learned classification models, and to estimate join sizes
~\cite{AlonMatiasSzegedy:1999,Achlioptas:2003,RusuDobra:2008}.
However, in modern applications involving very high-dimensional vectors, computing exact inner products can be intractable. The computational cost is $O(n)$ and computing $\langle \bv{a}, \bv{b}\rangle$ requires loading $O(n)$ numbers from memory, or communicating $O(n)$ numbers if $\bv{a}$ and $\bv{b}$ are stored on different machines.

A common approach for resolving this issue is to pre-compute a small space compression (a sketch) of each vector, which we will denote by $\mathcal{S}(\bv{a})$ and $\mathcal{S}(\bv{b})$, respectively. An estimation function $\mathcal{F}$ is then used to approximate the inner product as $\mathcal{F}\left(\mathcal{S}(\bv{a}), \mathcal{S}(\bv{b})\right) \approx \langle \bv{a}, \bv{b} \rangle$.
 The beauty of sketching is that it simultaneously reduces storage, communication, and runtime complexity. Moreover, once computed, sketches can be reused again and again to estimate inner products with other vectors. For example, given another vector $\bv{c}$ we can estimate $\langle \bv{a}, \bv{c} \rangle \approx \mathcal{F}\left(\mathcal{S}(\bv{a}), \mathcal{S}(\bv{c})\right)$. 

 Sketching methods for approximating inner products are already widely used throughout computer science. In machine learning, they can be used to accelerate the training of large-scale linear models like support vector machines or logistic regression 
\cite{ArriagaVempala:2006,LiShrivastavaMoore:2011}. In relational databases, inner product sketches are used in query optimizers to choose optimal query plans without having to execute expensive queries that involve large joins \cite{CormodeGarofalakisHaas:2011}. More recently, inner product sketches have found applications in dataset search and discovery, where they are used to discover joinable tables \cite{FernandezMinNava:2019} and to estimate other column statistics, such as correlation~\cite{SantosBessaChirigati:2021}, without explicitly performing a {join} operation between two tables. We discuss these applications and others in \Cref{sec:apps}.

\smallskip
\noindent\textbf{What was Previously Known?}
In all of the applications above, a primary concern is optimizing the trade-off between the sketch size (which governs storage, communication, and runtime efficiency) and how accurately $\mathcal{F}\left(\mathcal{S}(\bv{a}), \mathcal{S}(\bv{b})\right)$ approximates 
$\langle \bv{a}, \bv{b} \rangle$. A large sketch size will in general lead to better approximation, but the question is by exactly how much. 
Currently, the only methods with strong theoretical guarantees on this tradeoff 
for \textit{general vectors} (i.e., vectors without any assumed value distribution or magnitude) are based on \emph{linear sketching} algorithms. Such algorithms include the famous ``tug-of-war'' sketch, a.k.a. the AMS sketch \cite{AlonMatiasSzegedy:1999,AlonGibbonsMatias:1999}, the CountSketch algorithm \cite{CharikarChenFarach-Colton:2002}, and methods based on Johnson-Lindenstrauss (JL) random projection~\cite{Achlioptas:2003,DasguptaGupta:2003}.

All of these approaches have a similar form. We choose a random matrix $\bv{\Pi} \in \R^{m\times n}$ ($\bv{\Pi}$ might have i.i.d. random entries or more complex structure) and sets $\mathcal{S}(\bv{a})= \bv{\Pi}\bv{a}$ and $\mathcal{S}(\bv{b}) = \bv{\Pi}\bv{b}$.
Each sketch is a length $m$ vector and is considered a \emph{linear sketch} since $\mathcal{S}$ is a linear function.
To estimate the inner product, the typical approach is to simply return the sketch inner product $\langle \mathcal{S}(\bv{a}), \mathcal{S}(\bv{b})\rangle$.\footnote{Other estimators involving e.g., the median of multiple approximate inner products, are also used \cite{LarsenPaghTetek:2021}. However, theoretical guarantees are similar, typically differing in the dependence on the failure probability $\delta$}

\begin{table*}[t]
\vspace{-.75em}
	\def\arraystretch{1}
	\centering
	\begin{tabular}{ p{0.37\linewidth}|p{0.35\linewidth}|p{0.16\linewidth}}
		\toprule
		\textbf{Method}  & \textbf{Error for sketches of size $O(1/\epsilon^2)$} & \textbf{Assumptions}\\
		\midrule
		Johnson-Lindenstrauss Projection \cite{ArriagaVempala:2006}, AMS \cite{AlonMatiasSzegedy:1999}, CountSketch \cite{CharikarChenFarach-Colton:2002} & $\epsilon \cdot\|\bv{a}\| \|\bv{b}\|$ & None \\ 
				\midrule
		MinHash (MH) Sampling \cite{BeyerHaasReinwald:2007} & $\epsilon \cdot \max\left(\|\bv{a}_{\mathcal{I}}\|\|\bv{b}\|, \|\bv{a}\|\|\bv{b}_{\mathcal{I}}\| \right)$ & $\bv{a}, \bv{b}$ are binary, i.e. with $\{0,1\}$ entries. \\
		\midrule
		\textbf{Our Method: Weighted MinHash (WMH) Sampling} & $\epsilon \cdot \max\left(\|\bv{a}_{\mathcal{I}}\|\|\bv{b}\|, \|\bv{a}\|\|\bv{b}_{\mathcal{I}}\| \right)$ & None\\
		\bottomrule
	\end{tabular}
    \vspace{.25em}
	\caption{Comparison of high-probability additive error guarantees for estimating $\langle \bv{a}, \bv{b}\rangle$ using various sketching methods.
	We let  $\mathcal{I} = \{i: \bv{a}[i] \neq 0\text{ and } \bv{b}[i] \neq 0\}$ denote the intersection of $\bv a$'s  and $\bv{b}$'s supports. $\bv{a}_{\mathcal{I}}$ and $\bv{b}_{\mathcal{I}}$ are $\bv{a}$ and $\bv{b}$ restricted to indices in $\mathcal{I}$.
	Since $\max\left(\|\bv{a}_{\mathcal{I}}\|\|\bv{b}\|, \|\bv{a}\|\|\bv{b}_{\mathcal{I}}\| \right) \leq \|\bv{a}\| \|\bv{b}\|$, the bound for our Weighted MinHash (WMH) method always beats the linear sketching methods. Our bound matches that of unweighted MinHash, but without the strong limiting assumption that $\bv{a}$ and $\bv{b}$ are binary; it holds for all vectors.}
	\label{tab:error_guarantees}
\end{table*}

 A textbook theoretical accuracy guarantee for inner product estimation based on linear sketching is:
\begin{fact}[Linear Sketching for Inner Products \cite{ArriagaVempala:2006}]
	\label{fact:jl_result}
	Let $\epsilon,\delta \in (0,1)$ be accuracy and failure probability parameters respectively and let $m = O(\log(1/\delta)/\epsilon^2)$.
	Let $\bv{\Pi} \in \R^{m \times n}$ be a random matrix with each entry set independently to $+\sqrt{1/m}$ or $-\sqrt{1/m}$ with equal probability. For length $n$ vectors $\bv{a},\bv{b} \in \R^n$, let  $\mathcal{S}(\bv{a}) = \bv{\Pi}\bv{a}$ and $\mathcal{S}(\bv{b}) = \bv{\Pi}\bv{b}$. With probability at least $1-\delta$,
    \vspace{-.2em}
	\begin{align*}
		\left |  \langle \mathcal{S}(\bv{a}), \mathcal{S}(\bv{b})\rangle - \langle \bv{a}, \bv{b} \rangle \right | \leq \epsilon \|\bv{a}\| \|\bv{b}\|
	\end{align*}
    \vspace{-.25em}
	where $\norm{\bv x}$ denotes the standard Euclidean norm.
\end{fact}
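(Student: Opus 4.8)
The plan is to run the classical polarization argument, reducing the inner-product guarantee to a norm-preservation (distributional Johnson–Lindenstrauss) statement for two fixed vectors. First I would note that the claim is invariant under the scalings $\bv{a} \mapsto \alpha\bv{a}$ and $\bv{b} \mapsto \beta\bv{b}$: both $\langle \bv{\Pi}\bv{a}, \bv{\Pi}\bv{b}\rangle$ and $\langle \bv{a}, \bv{b}\rangle$ are bilinear, and the target bound $\epsilon\norm{\bv{a}}\norm{\bv{b}}$ scales in exactly the same way. Hence it suffices to treat the normalized case $\norm{\bv{a}} = \norm{\bv{b}} = 1$, which I would assume from here on, rescaling at the very end.

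Next I would record unbiasedness together with the structural identity that drives everything. Writing $\bs{\pi}_1, \dots, \bs{\pi}_m$ for the rows of $\bv{\Pi}$, each entry has mean $0$ and variance $1/m$ and distinct entries are independent, so $\E[\bs{\pi}_r[i]\,\bs{\pi}_r[j]] = \tfrac{1}{m}\ind{i=j}$; summing over rows gives $\E[\langle \bv{\Pi}\bv{a}, \bv{\Pi}\bv{b}\rangle] = \langle \bv{a}, \bv{b}\rangle$. The key step is the polarization identity
\[ \langle \bv{\Pi}\bv{a}, \bv{\Pi}\bv{b}\rangle - \langle \bv{a}, \bv{b}\rangle = \tfrac{1}{4}\left[\left(\norm{\bv{\Pi}(\bv{a}+\bv{b})}^2 - \norm{\bv{a}+\bv{b}}^2\right) - \left(\norm{\bv{\Pi}(\bv{a}-\bv{b})}^2 - \norm{\bv{a}-\bv{b}}^2\right)\right], \]
which reduces controlling the inner-product error to controlling how accurately $\bv{\Pi}$ preserves the squared norms of the two fixed vectors $\bv{a}+\bv{b}$ and $\bv{a}-\bv{b}$.

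The main technical ingredient is then a distributional JL bound: for any fixed $\bv{x} \in \R^n$ and $\epsilon \in (0,1)$,
\[ \Pr\left[\left|\norm{\bv{\Pi}\bv{x}}^2 - \norm{\bv{x}}^2\right| > \epsilon\norm{\bv{x}}^2\right] \le 2\exp(-c\,m\,\epsilon^2) \]
for an absolute constant $c > 0$. I would prove this by observing that $\norm{\bv{\Pi}\bv{x}}^2 = \sum_{r=1}^m (\bs{\pi}_r \cdot \bv{x})^2$ is a sum of $m$ i.i.d.\ terms, each the square of a Rademacher-weighted sum and hence sub-exponential, and applying a Bernstein / moment-generating-function bound. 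This is precisely the place where the $\log(1/\delta)$ dependence (rather than the $1/\delta$ dependence one would get from Chebyshev alone) enters $m$, and I expect it to be the main obstacle: one must bound the sub-exponential (Orlicz) norm of $(\bs{\pi}_r \cdot \bv{x})^2$ and track the two-sided tail, which is the genuinely nontrivial concentration estimate behind the whole result.

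To finish, I would apply the distributional JL bound with accuracy $\epsilon$ separately to $\bv{a}+\bv{b}$ and to $\bv{a}-\bv{b}$ and take a union bound over the two events; choosing $m = O(\log(1/\delta)/\epsilon^2)$ makes the total failure probability at most $\delta$. On the complementary good event, the polarization identity above together with the parallelogram law $\norm{\bv{a}+\bv{b}}^2 + \norm{\bv{a}-\bv{b}}^2 = 2(\norm{\bv{a}}^2 + \norm{\bv{b}}^2) = 4$ yields
\[ \left|\langle \bv{\Pi}\bv{a}, \bv{\Pi}\bv{b}\rangle - \langle \bv{a}, \bv{b}\rangle\right| \le \tfrac{\epsilon}{4}\left(\norm{\bv{a}+\bv{b}}^2 + \norm{\bv{a}-\bv{b}}^2\right) = \epsilon = \epsilon\norm{\bv{a}}\norm{\bv{b}}, \]
which is the claimed bound once the normalization is undone. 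As an alternative that avoids polarization, one could instead work directly with the estimator by writing $\langle \bv{\Pi}\bv{a}, \bv{\Pi}\bv{b}\rangle = \sum_{r=1}^m (\bs{\pi}_r \cdot \bv{a})(\bs{\pi}_r \cdot \bv{b})$, a sum of $m$ i.i.d.\ sub-exponential summands, and applying Bernstein's inequality to this sum; this routes around the $\norm{\bv{a}+\bv{b}}$/$\norm{\bv{a}-\bv{b}}$ bookkeeping but relies on the very same sub-exponential tail control, so the essential difficulty is unchanged.
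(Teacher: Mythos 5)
Your proposal is correct, but there is nothing in the paper to compare it against: the statement is presented as \Cref{fact:jl_result}, a known result imported from prior work (Arriaga--Vempala) with a citation only, and the paper never proves it. Your argument is the standard textbook route to this bound: reduce by bilinearity to $\norm{\bv{a}}=\norm{\bv{b}}=1$ (after disposing of the trivial case $\bv{a}=\bv{0}$ or $\bv{b}=\bv{0}$), apply the polarization identity to express the inner-product error through the norm-preservation errors on $\bv{a}+\bv{b}$ and $\bv{a}-\bv{b}$, invoke the distributional JL lemma for sign matrices, union bound over the two events, and finish with the parallelogram law, which gives exactly $\tfrac{\epsilon}{4}\left(\norm{\bv{a}+\bv{b}}^2+\norm{\bv{a}-\bv{b}}^2\right)=\epsilon$. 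You also correctly locate the real content of the result: the sub-exponential (Bernstein/MGF) tail bound for $\norm{\bv{\Pi}\bv{x}}^2$, which is what yields the $\log(1/\delta)$ rather than $1/\delta$ dependence in $m$, and which does hold for $\pm\sqrt{1/m}$ entries by Achlioptas-style analysis. The alternative you sketch (Bernstein applied directly to $\sum_{r}(\bs{\pi}_r\cdot\bv{a})(\bs{\pi}_r\cdot\bv{b})$) is equally valid and, as you note, relies on the same concentration ingredient. One small bookkeeping point: the union bound over the two polarization events costs a factor of $2$ in the failure probability, so the constant hidden in $m=O(\log(1/\delta)/\epsilon^2)$ must absorb it; your phrasing already allows for this.
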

In addition to dense random matrices, analogous results to \Cref{fact:jl_result} can be proven for sparse JL matrices, CountSketch matrices, and other linear sketches \cite{CormodeGarofalakisHaas:2011}. 
The fact provides a powerful accuracy guarantee that improves with the sketch size $m$ and depends naturally on the norms of $\bv{a}$ and $\bv{b}$. To the best of our knowledge, linear sketching methods were previously the only known algorithms to obtain such a strong theoretical guarantee.

\subsection{Our Contributions}
In this paper we introduce a novel method for inner product sketching based on the Weighted MinHash sketch \cite{GollapudiPanigrahy:2006,ManasseMcSherryTalwar:2010,Ioffe:2010}, which is a variant of the classic MinHash method \cite{Broder:1997,BroderCharikarFrieze:1998}. We prove that our method obtains a refined guarantee than \Cref{fact:jl_result}. In particular, it matches the result for linear sketches in the worst case when $\bv{a}$ and $\bv{b}$ are dense\footnote{For dense vectors, \Cref{fact:jl_result} is actually optimal up to constants: recent work implies that no sketch of size $m = o(\log(1/\delta)/\epsilon^2)$ can achieve error $\epsilon \|\bv{a}\|\|\bv{b}\|$ with probability $1-\delta$ for all inputs  \cite{LarsenNelson:2017,AlonKlartag:2017}. Our result also matches this lower bound.}, but always obtains a \emph{better bound} when  $\bv{a}$ and $\bv{b}$ are sparse vectors with limited overlap between non-zero entries.  As discussed further in \Cref{sec:apps}, such pairs of vectors are the norm in many applications of inner product sketching to database problems and modern dataset search applications.
\begin{theorem}[Main Result]
	\label{thm:main}
		Let $\epsilon,\delta \in (0,1)$ be accuracy and failure probability parameters and let $m = O(\log(1/\delta)/\epsilon^2)$. There is an algorithm $\mathcal{S}$ 
that produces size-$m$ sketches (\Cref{alg:weighted_sketch}), 		
		along with an estimation procedure $\mathcal{F}$ (\Cref{alg:weight_est}), such that for any $\bv a,\bv b \in \R^n$, with probability at least $1-\delta$,
	\begin{align*}
		\left | \mathcal{F}\left(\mathcal{S}(\bv{a}), \mathcal{S}(\bv{b})\right) - \langle \bv{a}, \bv{b} \rangle\right | \leq  \epsilon \max\left(\|\bv{a}_{\mathcal{I}}\|\|\bv{b}\|, \|\bv{a}\|\|\bv{b}_{\mathcal{I}}\| \right)
	\end{align*}
	Above, $\mathcal{I} = \{i: \bv{a}[i] \neq 0\text{ and } \bv{b}[i] \neq 0\}$ is the intersection of $\bv a$'s  and $\bv{b}$'s supports. $\bv{a}_{\mathcal{I}}$ and $\bv{b}_{\mathcal{I}}$ denote $\bv{a}$ and $\bv{b}$ restricted to indices in $\mathcal{I}$. 
\end{theorem}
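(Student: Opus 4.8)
The plan is to realize each sketch as a \emph{coordinated sample} of coordinates, drawn with probability proportional to squared entries, and to build an inverse-probability (Horvitz--Thompson) estimator of the inner product on top of it. Concretely, I would attach to each coordinate $i \in [n]$ a single shared uniform $u_i$ (the same randomness for both vectors), and for $\bv a$ retain the $m$ coordinates minimizing $u_i / \bv a[i]^2$, and symmetrically for $\bv b$. For each retained coordinate the sketch stores the index $i$, the \emph{signed} value $\bv a[i]$, and enough global information (e.g. $\norm{\bv a}$ and the retention threshold) to reconstruct sampling probabilities later. Sampling proportional to $\bv a[i]^2$ is exactly the Weighted MinHash generalization of unweighted MinHash, and because squares are used the scheme is insensitive to signs — this is what removes the binary-vector restriction of the MinHash row in \Cref{tab:error_guarantees}.

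Because the uniforms $u_i$ are shared, a coordinate $i$ survives in \emph{both} samples precisely when $u_i$ falls below both retention thresholds, so its collision probability is governed by the smaller of the two sampling rates, $p_i = \min\!\left(m\,\bv a[i]^2/\norm{\bv a}^2,\; m\,\bv b[i]^2/\norm{\bv b}^2\right)$ (capped at $1$). I would then define the estimator
\[ \hat X = \sum_{i} \ind{i \in T_a \cap T_b}\,\frac{\bv a[i]\,\bv b[i]}{p_i}, \]
where $T_a, T_b$ are the retained index sets. Unbiasedness is immediate: $\E[\hat X] = \sum_i \bv a[i]\bv b[i] = \langle \bv a, \bv b\rangle$, and only $i \in \mathcal I$ can contribute, since $\bv a[i]\bv b[i]=0$ off the support intersection.

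The crux is the variance bound, and here the $\min$ in $p_i$ is exactly what produces the $\max$ in the target error. Bounding the diagonal terms via $\sum_i \Var[\ind{i\in T_a\cap T_b}\bv a[i]\bv b[i]/p_i] \le \sum_{i\in\mathcal I} (\bv a[i]\bv b[i])^2/p_i$, and using the identity
\[ \frac{(\bv a[i]\bv b[i])^2}{\min\!\left(\bv a[i]^2/\norm{\bv a}^2,\ \bv b[i]^2/\norm{\bv b}^2\right)} = \max\!\left(\bv b[i]^2\norm{\bv a}^2,\ \bv a[i]^2\norm{\bv b}^2\right), \]
I would sum over $i \in \mathcal I$ and apply $\max(x,y)\le x+y$ to obtain $\Var[\hat X] \lesssim \tfrac1m\max\!\left(\norm{\bv a_{\mathcal I}}^2\norm{\bv b}^2,\ \norm{\bv a}^2\norm{\bv b_{\mathcal I}}^2\right)$. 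Taking $m = \Theta(1/\epsilon^2)$ makes the standard deviation at most $\epsilon\max(\norm{\bv a_{\mathcal I}}\norm{\bv b}, \norm{\bv a}\norm{\bv b_{\mathcal I}})$, and a median-of-means over $\Theta(\log(1/\delta))$ independent repetitions upgrades Chebyshev's constant-probability guarantee to the claimed $1-\delta$ bound.

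The main obstacle I anticipate is that the clean collision probability $p_i$ above is exact only for a fixed (Poisson) retention threshold, whereas a genuine size-$m$ bottom-$m$ sketch uses a \emph{random} threshold (the $m$-th smallest rank), which both correlates the indicators $\ind{i\in T_a\cap T_b}$ across coordinates and makes $p_i$ not directly known to the estimator. Handling this rigorously is the delicate step: I would either analyze a Poissonized variant and transfer the bound, or condition on the threshold and show the induced covariances are non-positive (as is typical for without-replacement sampling), so that the diagonal bound still controls $\Var[\hat X]$. The algebraic heart — the identity converting $\min$ into $\max$ — is routine once the coordinated sampling scheme and its collision probabilities are correctly pinned down.
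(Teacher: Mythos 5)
Your proposal is correct in substance but follows a genuinely different route from the paper. The paper does not use a single coordinated sample with Horvitz--Thompson weighting; instead it draws $m$ i.i.d.\ weighted-MinHash samples: it normalizes $\bv a$ to unit norm, rounds squared entries to integer multiples of $1/L$ (\Cref{alg:round}, with $L = \Theta(n^6/\epsilon^2)$), expands the result into a length-$nL$ vector by repeating each coordinate proportionally to its squared value, and applies unweighted MinHash to the expansion (\Cref{alg:weighted_sketch}). Each sketch coordinate then collides with probability equal to the \emph{weighted Jaccard similarity}, and conditioned on a collision the shared index $j$ appears with probability \emph{proportional} to $\min(\tilde{\bv a}[j]^2, \tilde{\bv b}[j]^2)$; like you, the paper's estimator divides by this min, but because its sampling probabilities are only proportional rather than absolute, it must also multiply by the weighted union size $M = \sum_j \max(\tilde{\bv a}[j]^2, \tilde{\bv b}[j]^2)$, which is unknown and has to be estimated by a Flajolet--Martin-style distinct-elements estimator (\Cref{lem:distinct_elem}), contributing a separate multiplicative error term that the proof controls by a union bound. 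The algebraic heart is identical in both arguments --- normalize so the two sampling rates are comparable, then the $\min$ in the sampling probability becomes the $\max$ in the second moment --- but your formulation buys a leaner analysis: your inclusion probabilities $p_i$ are computable exactly from the stored values and norms, so the union-size estimator, the discretization parameter $L$, and the paper's entire rounding analysis (\Cref{lem:round}) simply disappear. What the paper buys in exchange is that its $m$ sketch coordinates are i.i.d.\ by construction and the sketch has size exactly $m$, so independence in the variance calculation is automatic. (One formal caveat: \Cref{thm:main} as stated refers to the paper's specific \Cref{alg:weighted_sketch,alg:weight_est}; your scheme is a different algorithm, so what you prove is the existential version of the claim, which is the only reasonable reading for a blind proof.)

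The one real gap is exactly the one you flagged, and your first proposed fix is the one that closes it. In the Poissonized scheme --- retain $i$ in $T_a$ iff $u_i < m\,\bv a[i]^2/\norm{\bv a}^2$, with a fixed rather than rank-based threshold --- the joint inclusion probability is exactly $p_i = \min\left(1,\; m\,\bv a[i]^2/\norm{\bv a}^2,\; m\,\bv b[i]^2/\norm{\bv b}^2\right)$, the indicators are independent across coordinates, and your unbiasedness and variance bounds hold verbatim, giving the theorem via Chebyshev plus the median over $O(\log(1/\delta))$ repetitions. The cost is that the sketch size is only $m$ in expectation; but it is a sum of independent Bernoullis, so a Chernoff bound caps it at $2m$ except with a small probability that can be folded into the constant per-repetition failure probability before taking the median. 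By contrast, your second remedy --- proving non-positive covariances for exact bottom-$m$ coordinated samples --- is not routine: the joint inclusion indicators involve two correlated random thresholds, and making inverse-probability weights well-defined there requires a rank-conditioning argument in the style of Cohen--Kaplan; indeed the paper explicitly remarks that no worst-case guarantees were previously known for such coordinated-sampling estimators. So you should regard Poissonization as the actual proof and the fixed-size bottom-$m$ sketch as an unproven (though plausible) refinement.
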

We  always have $\|\bv{a}_{\mathcal{I}}\| \leq \|\bv{a}\|$ and  $\|\bv{b}_{\mathcal{I}}\| \leq \|\bv{b}\|$, so we can bound $ \max\left(\|\bv{a}_{\mathcal{I}}\|\|\bv{b}\|, \|\bv{a}\|\|\bv{b}_{\mathcal{I}}\| \right) \leq \|\bv{a}\|\|\bv{b}\|$.
That is, the guarantee of \Cref{thm:main} matches that of  \Cref{fact:jl_result} in the worse-case, but can be significantly better. 
For example, consider $\bv{a}$ and $\bv{b}$ that have roughly the same number of non-zero entries, but only a $\gamma < 1$ fraction of those entries are non-zero in \emph{both} $\bv{a}$ and $\bv{b}$. In this case, it is reasonable to expect that 
$\|\bv{a}_{\mathcal{I}}\|^2 \approx \gamma \|\bv{a}\|^2$ and $\|\bv{b}_{\mathcal{I}}\|^2 \approx \gamma\| \bv{b}\|^2$ since $\bv{a}_{\mathcal{I}}$ and $\bv{b}_{\mathcal{I}}$ contain just a $\gamma$ fraction of entries from the original vectors. Our course, the actually improvement is data dependent; for example, we might have that $\|\bv{a}_{\mathcal{I}}\|^2$ is significantly smaller than $\gamma \|\bv{a}\|^2$, or that it is not much smaller than $\|\bv{a}\|^2$. 

Nevertheless, considering the ``typical case'' when a $\gamma$ fraction of non-zeros overlap, we might expect the bound from \Cref{thm:main} to be better than \Cref{fact:jl_result} by a factor of $\sqrt{\gamma}$. 
So, to obtain the same error as a linear sketch, our method could set
$m$ smaller by a factor of $\gamma$. 
In many applications, $\gamma$ is very small. E.g., in \Cref{sec:exp} we consider a document similarity problem where $\gamma \leq .05$ for $95\%$ of vector pairs sketched. This could equate to roughly a
  $20x$ improvement in sketch size required to achieve a specified level of error. 

Thanks to their strong theoretical guarantees, linear sketching algorithms have become the go-to approach for generic inner product estimation \cite{CormodeGarofalakisHaas:2011}. Our results show for the first time that an alternative method can provide stronger bounds. We hope that this paper will serve as a starting point for further investigation into hashing-based algorithms for inner product sketching. 

\vspace{-.25em}
\subsection{Motivating Application: Dataset Search} 
\label{sec:apps}
Before presenting the technical details of our results and discussing related work, we detail one application that could benefit from our proposed sketches, and helps illustrate the importance of obtaining bounds for inner product estimation that are sensitive to the number of overlapping non-zero entries in $\bv{a}$ and $\bv{b}$. 
Specifically, we consider the problem of \emph{dataset search} which has received increasing attention in recent years~\cite{LehmbergEtAl:2014,ZhuDengNargesianMiller:2019,YangZhangZhang:2019,ZhuNargesianPu:2016,FernandezMinNava:2019,SantosBessaChirigati:2021,SantosBessaMusco:2022}. 

Suppose that a data scientist wants to understand the reasons for fluctuations in taxi ridership in New York City in 2022. The analyst only has a table containing two columns: a \textit{date} column and the \emph{number of taxi rides} taken on that day. In order to carry out the analysis, she needs to find other tables, either in her organization's data lake or in public repositories like NYC Open Data  (which contain thousands of datasets \cite{New-York:2022}), that would bring in other relevant variables when 
joined with the original~table. For example, the analyst might hope to find weather data, which can impact taxi ridership.
Moreover, she would like to find relevant factors that she \emph{might not think of on her own}, in an \emph{automatic way}.

To solve this problem, we would like methods to automatically discover tables that are both 1) joinable with the target table (i.e., also contain columns with dates from 2022) and 2) meaningfully related with the analyst's data. For example, a table containing precipitation data should be returned if taxi ridership is significantly higher or lower on days with high precipitation. To find such tables, brute force search is not infeasible -- we typically cannot afford to join the analyst's table with all tables in the search set to look for good candidates. Instead, we need to efficiently estimate statistics between disparate tables \emph{without materializing their join} \cite{SantosBessaChirigati:2021}.

\begin{figure}[t]
\centering
\footnotesize

\parbox{.225\linewidth}{
  \centering
  \begin{tabular}{cc}
    \multicolumn{2}{c}{$\mathcal{T}_A$}        \\
    \hline
    \textbf{$K_A$} & \textbf{$V_A$} \\
    \hline
    1	& 6.0 \\
    3	& 2.0 \\
    4	& 6.0 \\
    5	& 1.0 \\
    6	& 4.0 \\
    7	& 2.0 \\
    8	& 2.0 \\
    9	& 8.0 \\
    11	& 3.0 \\
    \hline
  \end{tabular}
}
\parbox{.225\linewidth}{
    \centering
    \begin{tabular}{cc}
    \multicolumn{2}{c}{$\mathcal{T}_B$}        \\
    \hline
    \textbf{$K_B$} & \textbf{$V_B$} \\
    \hline
    2	& 1.0 \\
    4	& 5.0 \\
    5	& 1.0 \\
    8	& 2.0 \\
    10	& 4.0 \\
    11	& 2.5 \\
    12	& 6.0 \\
    15	& 6.0 \\
    16	& 3.7 \\
    \hline
    \end{tabular}
}
\parbox{.5\linewidth}{
    \centering
    \begin{tabular}{ccc}
    \multicolumn{3}{c}{$\mathcal{T}_{A \bowtie B}$}        \\
    \hline
    \textbf{$K_{A \bowtie B}$} & \textbf{$V_{A \bowtie}$} & {$V_{B\bowtie}$} \\
    \hline
    4	& 6.0	& 5.0 \\
    5	& 1.0	& 1.0 \\
    8	& 2.0	& 2.0 \\
    11	& 3.0	& 2.5 \\
    \hline \\
    \end{tabular}
    
    \texttt{SIZE}($V_{A \bowtie}$) = 4 \\
    \texttt{SUM}($V_{A \bowtie}$) = 12.0 \\
    \texttt{SUM}($V_{B \bowtie}$) = 10.5 \\
    \texttt{MEAN}($V_{A \bowtie}$) = 12.0/4 = 3.0 \\

} 
\caption{
The table $\mathcal{T}_{A \bowtie B}$ is the output of a one-to-one join between the tables $\mathcal{T}_A$ with $\mathcal{T}_B$.
We are interested in approximating post-join statistics (e.g., join size, sums, means, and covariances) of the table
$\mathcal{T}_{A \bowtie B}$ using only inner products.\vspace{-.5em}
}
\label{fig:example-tables}
\end{figure}

Sketching has become the most popular approach for performing this sort of estimation between unjoined data tables~\cite{ZhuNargesianPu:2016,FernandezMinNava:2019,YangZhangZhang:2019,SantosBessaChirigati:2021,SantosBessaMusco:2022}.
Specifically, a small-space sketch is \emph{precomputed} for all data tables in the search set. When the analyst issues a query to find relevant data, a sketch of her table is compared against these preexisting sketches using a fraction of the computational resources in comparison to explicitly materializing table joins~\cite{SantosBessaChirigati:2021}.

\myparagraph{Inner product sketching for dataset search} Interestingly, in the framework discussed above, many problems of interest can be formulated precisely as inner product sketching problems.
To see why this is the case, consider the example tables $\mathcal{T}_{A}$ and $\mathcal{T}_{B}$ shown in Figure~\ref{fig:example-tables}: each contains a  column of keys, $K_A$ and $K_B$, and a column of values,  $V_A$ and $V_B$. 
A \textit{join} operation between the tables 
on their keys 
generates the output table $\mathcal{T}_{A \bowtie B}$.\footnote{Note that, in the example described in \Cref{fig:example-tables}, we assume a one-to-one join. Dataset search problems can involve many-to-many joins as well, although a typical approach is to use a data aggregation function to reduce to the one-to-one setting \cite{SantosBessaChirigati:2021, SantosBessaMusco:2022, kanter2015deep}.}

We list in \Cref{fig:example-tables} a number of statistics that we might hope to estimate in $\mathcal{T}_{A \bowtie B}$ when searching for relevant datasets. We claim that all of these statistics can be estimated using inner products between vector representations of the tables, which we denote $\bv{x}^{\ind{K_A}}, \bv{x}^{{K_A}}$ and $\bv{x}^{\ind{K_B}}, \bv{x}^{{K_B}}$ respectively and show in  \Cref{fig:ex-vectors}.

First, it is easy to see that the size of $\mathcal{T}_{A \bowtie B}$ is equal to the intersection between the keys in $K_A$ and $K_B$, i.e., $|K_A \cap K_B| = 4$. This is in turn equal to the \emph{inner product} between $\bv{x}^{\ind{K_A}}$ and $\bv{x}^{\ind{K_B}}$. Similarly, the \texttt{SUM} aggregate of the values in $V_A$ after join (i.e., \texttt{SUM($V_{A \bowtie}$)}) is equal to the inner product $\text{\texttt{SUM($V_{A \bowtie}$)}} = \langle \bv{x}^{V_A}, \bv{x}^{\ind{K_B}} \rangle$.
To estimate a post-join mean (i.e., \texttt{MEAN}($V_{A\bowtie}$)), we can combine the join-size estimate with the SUM estimate:  
\begin{align*}
	\text{\texttt{MEAN}}(V_{A \bowtie}) = \frac{\langle \bv{x}^{V_A}, \bv{x}^{\ind{K_B}} \rangle}{\langle \bv{x}^{\ind{K_A}}, \bv{x}^{\ind{K_B}} \rangle}.
\end{align*}
Finally, computing a post-join inner product, $\langle \bv{x}^{V_A}, \bv{x}^{{V_B}} \rangle$ could be useful. 
In the application above, for tables containing precipitation data and taxi ridership, a high inner-product might signify that high precipitation days align with high ridership days.

\myparagraph{Comparison of different methods} Given the above reductions, both linear sketching methods like JL projection and CountSketch, and our Weighted MinHash method, can be directly applied to the dataset search problem. We simply need to precompute $\mathcal{S}(\bv{x}^{\ind{K_B}})$ and $\mathcal{S}(\bv{x}^{{V_B}})$ for all tables $\mathcal{T}_B$ in our search set. Sketching other vector transformations like $\mathcal{S}((\bv{x}^{{V_B}})^2)$ 
opens up the possibility of also estimating other quantities like post-join variance.

In search applications, we note that the vector length $n$ can be very large. However, computing sketches does not require fully materializing the vectors $\bv{x}^{\ind{K_A}}$ and $\bv{x}^{\ind{K_B}}$: all sketching methods discussed in this paper only need to process the vectors' non-zero entries. 
Furthermore, it is not necessary to know the $n$ beforehand: we can simply set $n$ to be large enough to cover the whole domain of the keys being sketched (e.g., $n=2^{32}$ or $n=2^{64})$.

To compare methods, \Cref{fact:jl_result} and \Cref{thm:main} suggest that any asymptotic differences in performance between our WMH method and linear sketching will depend on the overlap in non-zero entries between the vectors being sketched. In dataset search, this exactly corresponds to the \emph{Jaccard similarity} of the key sets ${K}_A$ and ${K}_B$. Our method will perform better when the Jaccard similarity is small. For example, in \Cref{fig:example-tables}, only $4$ out of $14$ unique keys are shared in both tables, so the similarity is $\approx .29$. In the scenario discussed above, we could imagine a much smaller ratio: for example, our data analyst might only have a table containing taxi data from 2022, but compare it to a weather data table with dates from 1960 through the present day. The Jaccard similarity would be $1/63 \approx . 016$. 
In Section \ref{sec:exp} we consider a dataset search use case involving data from the World Bank \cite{WBF_2022} where $42\%$ percent of table pairs had Jaccard similarity $<.1$, and $35\%$ have Jaccard similarity $< .05$.

\begin{figure}[t]
\centering
\footnotesize
\parbox{0.6\linewidth}{
\setlength{\tabcolsep}{0.35em}
\begin{tabular}{c|cccccccccccccccc}
\toprule
index & \bf 1 & \bf 2 & \bf 3 & \bf 4 & \bf 5 & \bf 6 & \bf 7 & \bf 8 & \bf 9 & \bf 10 & \bf 11 & \bf 12 & \bf 13 & \bf 14 & \bf 15 & \bf 16 \\
\midrule
$\bv{x}^{V_A}$ & 6.0 &0  & 2.0 & \bf 6.0 & \bf 1.0 & 4.0 & 2.0 & \bf 2.0 & 8.0 & 0 & \bf 3.0 & 0 & 0 & 0 & 0 & 0 \\
$\bv{x}^{\ind{K_A}}$ & 1 & 0 & 1 & \bf 1 & \bf 1 & 1 & 1 & \bf 1 & 1 & 0 & \bf 1 & 0 & 0 & 0 & 0 & 0 \\
\midrule
$\bv{x}^{V_B}$ & 0   & 1.0 &0  & \bf 5.0 & \bf 1.0 &0  &0  & \bf 2.0 & 0 & 4.0 & \bf 2.5 & 6.0 & 0 & 0 & 6.0 & 3.7 \\
$\bv{x}^{\ind{K_B}}$ &	0 & 1 & 0 & \bf 1 & \bf 1 & 0 & 0 & \bf 1 & 0 & 1 & \bf 1 & 1 & 0 & 0 & 1 & 1 \\
\bottomrule
\end{tabular}
}

\caption{
Vector representation of tables $\mathcal{T}_A$ with $\mathcal{T}_B$ from Figure~\ref{fig:example-tables}. The vector $\bv{x}^{\ind{K_A}}$ (resp. $\bv{x}^{\ind{K_B}}$) is the vector representation for the join key $K_A$ (resp. $K_B$) and $\bv{x}^{V_A}$ (resp. $\bv{x}^{V_B}$) is the vector representation for the column $V_A$ (resp. $V_B$). Bold numbers are entries included in the join result from $\mathcal{T}_{A \bowtie B}$.
}
\label{fig:ex-vectors}
\end{figure}

\subsection{Paper Roadmap}
In \Cref{sec:relate_work} we review related prior work. In \Cref{sec:unweighted-minhash} we outline an analysis of the standard \emph{unweighted} MinHash method for inner product estimation. This analysis serves as a technical warm-up for our main result (\Cref{thm:main}) on Weighted MinHash, which is presented in \Cref{sec:main_result}. Finally, in \Cref{sec:exp} we support \Cref{thm:main} with a detailed empirical evaluation of our method. 

\section{Related Work}
\label{sec:relate_work}

\myparagraph{Inner Product Estimation for Binary Vectors} Beyond linear sketching methods for estimating the inner product between general real-valued vectors $\bv{a}$ and $\bv{b}$, there has been a lot of prior work on the special case of \emph{binary} vectors with $\{0,1\}$ entries. 
For such vectors, approximating the inner product amounts to approximating the size of the intersection of two sets. Concretely, any $\bv{a}, \bv{b} \in \{0,1\}^n$ can be associated with sets $\mathcal{A}$ and $\mathcal{B}$ that contain integers from $\{1, \ldots, n\}$. We define $\mathcal{A}$ to contain all $i$ for which $\bv{a}[i] = 1$, and similarly $\mathcal{B}$ to contain all $i$ for which $\bv{b}[i] = 1$. Note that $\langle \bv{a}, \bv{b}\rangle = |\mathcal{A} \cap \mathcal{B}|$.

Applying \Cref{fact:jl_result}, we know that a linear sketch of size $m = O(1/\epsilon^2)$ can estimate $\langle \bv{a}, \bv{b}\rangle$ up to additive error $\epsilon \|\bv{a}\| \|\bv{b}\| = \epsilon \sqrt{|\mathcal{A}||\mathcal{B}|}$.
However,  a better bound can be obtained using non-linear sketching methods based on
the classic MinHash sketch \cite{Broder:1997,BroderCharikarFrieze:1998,Manber:1994,Heintze:1996}, the $k$-minimum value (KMV) sketch \cite{BeyerHaasReinwald:2007}, or related techniques \cite{LiKonig:2010,LiOwenZhang:2012}. With $m = O(1/\epsilon^2)$ space, such methods are achieve error  $\epsilon \sqrt{\max(|\mathcal{A}|, |\mathcal{B}|)\cdot |\mathcal{A}\cap \mathcal{B}|}$, which is always smaller than $\epsilon \sqrt{|\mathcal{A}||\mathcal{B}|}$ \cite{BeyerHaasReinwald:2007,PaghStockelWoodruff:2014}. For binary vectors, this bound was proven optimal in \cite{PaghStockelWoodruff:2014}.

Our work was motivated by this pre-existing result for binary vectors. In fact, our \Cref{thm:main}, is a strict generalization of the bound to \emph{all real-valued vectors}. When $\bv{a}$ and $\bv{b}$ are binary, we have that $\|\bv{a}_{\mathcal{I}}\|^2 = \|\bv{b}_{\mathcal{I}}\|^2 = |\mathcal{A}\cap \mathcal{B}|$. So it is not hard to see that $\epsilon \sqrt{\max(|\mathcal{A}|, |\mathcal{B}|)\cdot |\mathcal{A}\cap \mathcal{B}|} = \epsilon \cdot \max\left(\|\bv{a}_{\mathcal{I}}\|\|\bv{b}\|, \|\bv{a}\|\|\bv{b}_{\mathcal{I}}\| \right)$, which is exactly our bound from \Cref{thm:main}. We summarize how all prior inner product sketching methods compare to our result in \Cref{tab:error_guarantees}.

\myparagraph{Beyond Binary Vectors} 
There has been less work on obtaining better results for estimating inner products of vectors with non-binary entries. One recent paper \cite{LarsenPaghTetek:2021} proves refined bounds for the CountSketch method that depend on the $\ell_1$ norm of $\bv{a}$ and $\bv{b}$ (instead of the Euclidean norm). These bounds can be tighter than \Cref{fact:jl_result} for some vectors, especially when the sketch size $m$ is large. However,  the results are not directly comparable to ours. 

We take a different approach, moving beyond linear sketching entirely. Our main result is based on a class of sketches that we collectively refer to as ``Weighted MinHash'' methods \cite{ChiZhu:2017,Shrivastava:2016}.
These methods include weighted versions of coordinated random sampling \cite{CohenKaplan:2007,CohenKaplan:2013}, as well as the ``Consistent Weighted Sampling'' algorithm \cite{ManasseMcSherryTalwar:2010,GollapudiPanigrahy:2006} and its descendants, which are essentially equivalent, but computationally cheaper to apply \cite{Ioffe:2010,WuLiChen:2019,HaeuplerManasseTalwar:2014}. As shown in \Cref{sec:main_result}, Weighted MinHash sketches allows us to handle vectors whose entries have \emph{highly varying magnitude} (in contrast to binary vectors, where all non-zero entries have the same magnitude of $1$).

Weighted MinHash sketches have been used in a number of applications, including for approximating weighted Jaccard similarity \cite{WuLiChen:2019}, for near-duplicate detection with weighted features \cite{ManasseMcSherryTalwar:2010}, for approximating the distance between two vectors \cite{Ioffe:2010}, and for sketching image histograms \cite{Shrivastava:2016}. In many of these applications, the weighted sketches empirically outperform unweighted sketches. Weighted MinHash sketches have also been used to compute general ``sum aggregate'' queries, for which the inner product is a special case \cite{CohenKaplan:2013}. However, we are not aware of strong worst-case error guarantees for the above applications, let alone for the problem of general inner product estimation. 
Consistent Weighted Sampling has also been used to approximate inner products in \cite{kdd_2017}, albeit using a different estimator than in our work. However, non-asymptotic worst-case guarantees are not provided.

\myparagraph{Locality Sensitive Hashing} Finally, our problem of estimating inner products from sketches is closely related to cosine similarity and maximum inner product search (MIPS), where the goal is to retrieve vectors from a database with the \emph{highest} cosine similarity (respectively, inner product) with a given query vector. One approach for solving these problems is locality sensitive hashing \cite{GionisIndykMotwani:1999}, and there are methods based on both MinHash and random projections, like SimHash \cite{Charikar:2002}. It has been observed that MinHash often outperforms SimHash for binary data, which parallels what was previously known for binary  inner product estimation \cite{ShrivastavaLi:2014a}.

\section{Warmup: Unweighted MinHash} 
\label{sec:unweighted-minhash}
\label{sec:notation}
\myparagraph{Notation}
We use bold letters to denote vectors, and for a vector $\bv{a}$, $\bv{a}[k]$ denotes the $k^\text{th}$ entry (indexing starts with $1$). For two length $n$ vectors, $\bv a, \bv b$, $\langle \bv{a}, \bv{b}\rangle = \sum_{k=1}^n \bv{a}[k]\bv{b}[k]$ denotes the inner product. $\|\bv{a}\| =\ \sqrt{\langle \bv{a}, \bv{a}\rangle}$ denotes the Euclidean norm and $\|\bv{a}\|_{\infty} = \max_{k\in \{1,\ldots, n\}} |\bv{a}[k]|$ denotes the infinity norm. $\|\bv{a}\|_{1} = \sum_{k=1}^n |\bv{a}[k]|$ denotes the $\ell_1$ norm.
As is standard in the literature \cite{BeyerHaasReinwald:2007}, we assume access to uniformly random hash functions that map to the real line. I.e., we assume that we can construct a random function $h$ such that for any input $j \in \{1, \ldots, n\}$, $h(j)$ is distributed uniformly and independently on the interval $[0,1]$. In practice, $h$ can be replaced with a low-randomness function that map to a sufficient large discrete set $\{1/U, 2/U \ldots, 1\}$. Typically $U$ is chosen to equal $n^c$ for constant $c$ (e.g. $c = 3$) \cite{CormodeGarofalakisHaas:2011}.
We let $\Pr[E]$ denote the probability that a random event $E$ occurs, and $\mathbbm{1}[E]$ is the indicator random variable that evaluates to $1$ if $E$ occurs and to $0$ otherwise. $\E[X]$ and $\Var[X]$ denote the expectation and variance of a random variable $X$.

\myparagraph{An unweighted method} Before introducing our Weighted MinHash sketching method, we review the unweighted MinHash algorithm and prove a inner product estimation bound that can be obtained from this method. The bound closely follows prior work on binary vectors \cite{BeyerHaasReinwald:2007,PaghStockelWoodruff:2014} and only holds under strong assumptions on the sketched vectors $\bv{a}$ and $\bv{b}$ -- specifically that their entries are uniformly  bounded in magnitude. Nevertheless, it serves as a warmup for our main result, which is proven using a similar strategy, but eliminates the assumption by using weighted sampling.

Given a vector $\bv{a}$, we obtain an entry in the standard MinHash sketch (see e.g., \cite{Broder:1997}) by hashing the index of every non-zero entry in $\bv{a}$ to the interval $[0,1]$. We then store the smallest hash value. This process is repeated $m$ times with independently chosen random hash functions. For binary vectors $\bv{a}$ and $\bv{b}$ with non-zero index sets $\mathcal{A} = \{k: \bv{a}[k]\neq 0\}$ and $\mathcal{B} = \{k: \bv{b}[k]\neq 0\}$, the minimum hash value alone can be used to estimate the Jaccard similarity $|\mathcal{A}\cap \mathcal{B}|/|\mathcal{A}\cup \mathcal{B}|$ or the union size $|\mathcal{A}\cup \mathcal{B}|$ \cite{FlajoletMartin:1985,BeyerHaasReinwald:2007,KaneNelsonWoodruff:2010}.

\setlength{\textfloatsep}{4pt}
\begin{algorithm}[t]\caption{Unweighted MinHash Sketch}\label{alg:minhash_sketch}
	\begin{algorithmic}[1]
		\Require Length $n$ vector $\bv{a}$, sample number $m$, random seed $s$. 
		\Ensure Sketch $H_{\bv{a}} = \{H_{\bv{a}}^{hash}, H_{\bv{a}}^{val}\}$, where $H_{\bv{a}}^{hash}$ and $H_{\bv{a}}^{val}$ have length $m$ and contain values in $[0,1]$ and from $\bv{a}$, respectively
		\algrule
		\State Initialize random number generator with seed $s$.
		\For{i = 1, \ldots, m}
		\State Select uniformly random hash func. $h^i: \{1,..., n\}\rightarrow [0,1]$. 
		\State Compute $j^* = \argmin_{j \in \{1, \ldots, n\},\, \bv{a}[j] \neq 0} h^i(j)$.  \label{alg:j_star}
		\State Set $H_{\bv{a}}^{hash}[i] = h^i(j^*)$ and $H_{\bv{a}}^{val}[i] = \bv{a}[j^*]$
		\EndFor
		\State \Return $\{H_{\bv{a}}^{hash}, H_{\bv{a}}^{val}\}$
	\end{algorithmic}
 \vspace{-0.25em}
\end{algorithm}

For non-binary vectors, is it common to \emph{augment} the standard MinHash sketch by also storing the \emph{value} of the index with minimum hash value. This idea is used in ``coordinated sampling'' or ``conditional random sampling'' sketches \cite{CohenKaplan:2013,LiChurchHastie:2006,Cohen:2016}, and was recently used to extend MinHash and the closely related $k$-minimum values (KMV) sketch to estimate vector correlations \cite{SantosBessaChirigati:2021}. 
The basic augmented MinHash sketching method is shown in \Cref{alg:minhash_sketch}, which returns $H_{\bv{a}}^{hash}$ and $H_{\bv{a}}^{val}$ as vectors of minimum hashes and their corresponding vector values, respectively. 

For any single vector $\bv{a}$, the augmented MinHash sketch $H_\bv{a}$ contains a uniform subsample (collected with replacement) of the non-zero values in $\bv{a}$. This is because for all $i \in \{1, \ldots, m\}$ the minimum value of the $i^\text{th}$ hash is equally likely to come from any of the indices with non-zero value. More importantly, sketch can be used to obtain a uniform subsample from the \emph{intersection} of $\bv{a}$ and $\bv{b}$, i.e., from entries where both vectors are non-zero. This subsample can in turn be used to estimate the sum $\langle \bv{a},\bv{b}\rangle = \sum_{k=1}^n \bv{a}[k]\bv{b}[k]$, since $\bv{a}[k]\bv{b}[k]$ only contributes to the sum if $\bv{a}[k]$ and $\bv{b}[k]$ are both non-zero.
Concretely, we have the following well-known fact: 
\begin{fact}
	\label{fact:mhash_sketch}
	Consider vectors $\bv{a}$ and $\bv{b}$ sketched using \Cref{alg:minhash_sketch} to produce sketches $H_{\bv{a}}$ and $H_{\bv{b}}$. Define the sets $\mathcal{A} = \{i: \bv{a}[i]\neq 0\}$ and $\mathcal{B} = \{i: \bv{b}[i]\neq 0\}$. Then for all 
	$i\in\{1, \ldots, m\}$ we have:
	\begin{enumerate}
		\item $H_{\bv{a}}^{hash}[i] = H_{\bv{b}}^{hash}[i]$ with probability $\frac{|\mathcal{A} \cap \mathcal{B}|}{|\mathcal{A} \cup \mathcal{B}|}$.
		\item If $H_{\bv{a}}^{hash}[i] = H_{\bv{b}}^{hash}[i]$, then $H_{\bv{a}}^{val}[i] = \bv{a}[j]$ and $H_{\bv{b}}^{val}[i] = \bv{b}[j]$ for $j$ chosen uniformly at random from $\mathcal{A}\cap \mathcal{B}$.
	\end{enumerate}
\end{fact}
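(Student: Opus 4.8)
The plan is to exploit the key structural feature of \Cref{alg:minhash_sketch}: because both sketches are computed with the same random seed $s$, the $i^\text{th}$ coordinate of each sketch uses the \emph{same} hash function $h^i$. Thus $H_{\bv{a}}^{hash}[i] = \min_{j \in \mathcal{A}} h^i(j)$ and $H_{\bv{b}}^{hash}[i] = \min_{j \in \mathcal{B}} h^i(j)$, with the stored values coming from the respective argmin indices. Since both claims concern a single fixed $i$, I would drop the superscript and analyze one uniformly random hash $h$ restricted to the ground set $\mathcal{A} \cup \mathcal{B}$ (indices outside this set are irrelevant, as they have a zero entry in both vectors).

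For the first part, I would let $j^\star = \argmin_{j \in \mathcal{A} \cup \mathcal{B}} h(j)$ be the global minimizer over the union. Because $h$ assigns i.i.d.\ values that are continuous (uniform on $[0,1]$), ties occur with probability zero, so $j^\star$ is almost surely unique. The central observation is that $\min_{j \in \mathcal{A}} h(j) = \min_{j \in \mathcal{B}} h(j)$ if and only if $j^\star \in \mathcal{A} \cap \mathcal{B}$: if the union-minimizer lies in the intersection it simultaneously minimizes over both $\mathcal{A}$ and $\mathcal{B}$, whereas if it lies in only one set (say $\mathcal{A} \setminus \mathcal{B}$) then the minimum over the other set is achieved elsewhere and is strictly larger. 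By exchangeability of the i.i.d.\ uniform hash values, $j^\star$ is uniformly distributed over $\mathcal{A} \cup \mathcal{B}$, so $\Pr[j^\star \in \mathcal{A} \cap \mathcal{B}] = |\mathcal{A}\cap\mathcal{B}| / |\mathcal{A}\cup\mathcal{B}|$, giving the claimed collision probability.

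For the second part, I would condition on the collision event, which by the above is exactly $\{j^\star \in \mathcal{A}\cap\mathcal{B}\}$. On this event the two per-vector argmins both coincide with $j^\star$, so $H_{\bv{a}}^{val}[i] = \bv{a}[j^\star]$ and $H_{\bv{b}}^{val}[i] = \bv{b}[j^\star]$ are read off the \emph{same} index. It then remains to show that, conditioned on $j^\star \in \mathcal{A}\cap\mathcal{B}$, the index $j^\star$ is uniform over $\mathcal{A}\cap\mathcal{B}$. This again follows from symmetry: the unconditional law of $j^\star$ is uniform over $\mathcal{A}\cup\mathcal{B}$, and conditioning a uniform variable on membership in a subset yields the uniform distribution on that subset. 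Setting $j = j^\star$ completes the argument.

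I expect the only real subtlety — rather than a genuine obstacle — to be making the symmetry/exchangeability claims precise: namely that the global minimizer of i.i.d.\ continuous variables is uniform over the ground set, and that conditioning preserves uniformity on the intersection. Everything else reduces to the two facts that the sketches share a hash function and that ties have probability zero.
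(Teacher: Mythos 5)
Your proposal is correct and follows essentially the same route as the paper's own proof: both arguments rest on the shared hash function (same seed), the almost-sure absence of ties for continuous uniform hashes, the observation that a collision occurs exactly when the minimizer over $\mathcal{A}\cup\mathcal{B}$ lies in $\mathcal{A}\cap\mathcal{B}$, and the symmetry argument giving each index of the union probability $1/|\mathcal{A}\cup\mathcal{B}|$ of being that minimizer. The paper packages parts (1) and (2) into a single per-index statement while you split them into a uniformity claim plus a conditioning step, but this is only a difference in presentation, not in substance.
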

\Cref{fact:mhash_sketch} indicates that, to obtain a uniform subsample from the intersection of $\bv{a}$ and $\bv{b}$, we can simply take all entries in $H_{\bv{a}}^{val}$ and $H_{\bv{b}}^{val}$ where the corresponding entries in $H_{\bv{a}}^{hash}$ and $H_{\bv{b}}^{hash}$ are equal -- and, as per (1), they \emph{will be equal} with good probability. 

With \Cref{fact:mhash_sketch} in place, we describe an inner product estimator based on MinHash  (\Cref{alg:minhash_est}). This estimator will serve as a template for our weighted MinHash estimator in the next section. 

\begin{algorithm}[t]\caption{Unweighted MinHash Estimate}\label{alg:minhash_est}
	\begin{algorithmic}[1]
		\Require Sketches $H_{\bv{a}}= \{H_{\bv{a}}^{hash}, H_{\bv{a}}^{val}\}, H_{\bv{b}}= \{H_{\bv{b}}^{hash}, H_{\bv{b}}^{val}\}$ constructed using \Cref{alg:minhash_sketch} with the same inputs $m,s$.
		\Ensure Estimate of $\langle \bv{a}, \bv{b}\rangle$.
		\algrule
		\State Set $\tilde{U}= \frac{m}{\sum_{i=1}^m\min\left( H_{\bv{a}}^{hash}[i],H_{\bv{b}}^{hash}[i]\right)} - 1$
		\State \Return $\frac{\tilde{U}}{m} \sum_{i=1}^m \mathbbm{1}\left[ H_{\bv{a}}^{hash}[i] =H_{\bv{b}}^{hash}[i]\right] \cdot H_{\bv{a}}^{val}[i]\cdot H_{\bv{b}}^{val}[i]$ \label{alg:minhash_est_summation}
	\end{algorithmic}
\end{algorithm}
Consider the summation in line~\ref{alg:minhash_est_summation} of \Cref{alg:minhash_est}.
Using linearity of expectation and \cref{fact:mhash_sketch}, we can compute the expectation:
\begin{align*}
	&\E\left[\sum_{i=1}^m \mathbbm{1}\left[ H_{\bv{a}}^{hash}[i]=H_{\bv{b}}^{hash}[i]\right] \cdot H_{\bv{a}}^{val}[i]\cdot H_{\bv{b}}^{val}[i]\right]  \\&= m\cdot \E\left[\mathbbm{1}\left[ H_{\bv{a}}^{hash}[1] =H_{\bv{b}}^{hash}[1]\right] \cdot H_{\bv{a}}^{val}[1]\cdot H_{\bv{b}}^{val}[1]\right]\\ &= m\cdot  \sum_{j\in \mathcal{A}\cap \mathcal{B}}\frac{1}{|\mathcal{A}\cup \mathcal{B}|} \bv{a}[j]\bv{b}[j] = \frac{m}{|\mathcal{A}\cup \mathcal{B}|}\cdot \langle \bv{a}, \bv{b}\rangle. 
\end{align*}

It follows from the above that, if we multiplied the summation $\sum_{i=1}^m \mathbbm{1}\left[ H_{\bv{a}}^{hash}[i] =H_{\bv{b}}^{hash}[i]\right] \cdot H_{\bv{a}}^{val}[i]\cdot H_{\bv{b}}^{val}[i]$ by $\frac{|\mathcal{A}\cup \mathcal{B}|}{m}$, then we would have an unbiased estimate for $\langle \bv{a}, \bv{b}\rangle$, as desired. 
The only catch is that we do not \emph{know} $|\mathcal{A}\cup \mathcal{B}|$. This union size cannot be computed exactly from our sketches $H_\bv{a}$ and $H_{\bv{b}}$. However, it can be \emph{estimated} using the same information contained in our MinHash sketches. In particular, since $h^i$ hashes uniformly to $[0,1]$,  $\frac{m}{\sum_{i=1}^m\min\left( H_{\bv{a}}^{hash}[i],H_{\bv{b}}^{hash}[i]\right)} - 1$ provides a good estimate for $|\mathcal{A}\cup \mathcal{B}|$. This is  actually a standard variant of the well-known Flajolet-Martin distinct elements estimator \cite{FlajoletMartin:1985,BeyerHaasReinwald:2007}. In Line 1 of \cref{alg:minhash_est}, we set $\tilde{U}$ equal to 
this estimator 
and we multiply by $\frac{\tilde{U}}{m}$ in Line 2 as a surrogate for $\frac{|\mathcal{A}\cup \mathcal{B}|}{m}$.  This gives our final estimator for $\langle \bv{a}, \bv{b}\rangle$.

Overall, we are able to prove the following concentration bound for the estimator for computing the inner product between any pair of \emph{bounded vectors}. For binary vectors, the constant $c$ below equals $1$ and we exactly recover the bounds from prior work \cite{PaghStockelWoodruff:2014}.

\begin{theorem}[Intermediate Result: Inner Product Sketching with Unweighted MinHash]
	\label{thm:bounded}
	Let $\epsilon,\delta \in (0,1)$ be accuracy and failure probability parameters and let $m = O(\log(1/\delta)/\epsilon^2)$.  There is an algorithm $\mathcal{S}$ that produces size-$m$ sketches (\Cref{alg:minhash_sketch}),  along with an estimation procedure $\mathcal{F}$, such that for any $\bv a,\bv b \in \R^n$ with entries bounded in $[-c,c]$, with probability at least $1-\delta$,
	\begin{align*}
		\left | \mathcal{F}\left(\mathcal{S}(\bv{a}), \mathcal{S}(\bv{b})\right) - \langle \bv{a}, \bv{b} \rangle\right | \leq  \epsilon \cdot c^2 \cdot \sqrt{\max(|\mathcal{A}|, |\mathcal{B}|)\cdot |\mathcal{A}\cap \mathcal{B}|}
	\end{align*}
	for $\mathcal{A} = \{i: \bv{a}[i] \neq 0\}$ and $\mathcal{B} = \{i: \bv{b}[i] \neq 0\}$.
\end{theorem}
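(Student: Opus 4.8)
The plan is to analyze the estimator of \Cref{alg:minhash_est} in two stages: first pretend we know the true union size $|\mathcal{A}\cup\mathcal{B}|$ exactly, then account for the error introduced by replacing it with $\tilde U$. For the first stage, define the idealized estimator $Z = \frac{|\mathcal{A}\cup\mathcal{B}|}{m}\sum_{i=1}^m X_i$, where $X_i = \ind{H_{\bv{a}}^{hash}[i] = H_{\bv{b}}^{hash}[i]}\cdot H_{\bv{a}}^{val}[i]\cdot H_{\bv{b}}^{val}[i]$. The computation carried out just before the theorem statement shows $\E[X_i] = \langle \bv{a},\bv{b}\rangle/|\mathcal{A}\cup\mathcal{B}|$, so $Z$ is unbiased for $\langle \bv{a},\bv{b}\rangle$. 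The $X_i$ are i.i.d.\ across $i$ (each uses an independent hash $h^i$), so it remains to control the fluctuation of their average.

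First I would bound the variance. By \Cref{fact:mhash_sketch}, $X_i$ equals $\bv{a}[j]\bv{b}[j]$ for a uniformly random $j\in\mathcal{A}\cap\mathcal{B}$ with probability $\frac{|\mathcal{A}\cap\mathcal{B}|}{|\mathcal{A}\cup\mathcal{B}|}$ and is $0$ otherwise, so
\[
\Var[X_i] \le \E[X_i^2] = \frac{1}{|\mathcal{A}\cup\mathcal{B}|}\sum_{j\in\mathcal{A}\cap\mathcal{B}}\bv{a}[j]^2\bv{b}[j]^2 \le \frac{c^4\,|\mathcal{A}\cap\mathcal{B}|}{|\mathcal{A}\cup\mathcal{B}|},
\]
using the hypothesis $|\bv{a}[j]|,|\bv{b}[j]|\le c$. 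Hence $\Var[Z] = \frac{|\mathcal{A}\cup\mathcal{B}|^2}{m}\Var[X_1] \le \frac{|\mathcal{A}\cup\mathcal{B}|\,c^4\,|\mathcal{A}\cap\mathcal{B}|}{m} \le \frac{2\max(|\mathcal{A}|,|\mathcal{B}|)\,c^4\,|\mathcal{A}\cap\mathcal{B}|}{m}$, where the last step uses $|\mathcal{A}\cup\mathcal{B}|\le 2\max(|\mathcal{A}|,|\mathcal{B}|)$. With $m = \Theta(1/\epsilon^2)$ the standard deviation of $Z$ is at most $O(\epsilon)\cdot c^2\sqrt{\max(|\mathcal{A}|,|\mathcal{B}|)\,|\mathcal{A}\cap\mathcal{B}|}$, so Chebyshev's inequality gives that $|Z - \langle \bv{a},\bv{b}\rangle|$ lies within the target error with some constant probability (say $2/3$).

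Next I would account for using $\tilde U$ in place of $|\mathcal{A}\cup\mathcal{B}|$. The output of \Cref{alg:minhash_est} is exactly $\frac{\tilde U}{|\mathcal{A}\cup\mathcal{B}|}\,Z$, and \Cref{lem:distinct_elem} (invoked with a constant failure probability, which needs only $m = O(1/\epsilon^2)$) guarantees $\tilde U \in (1\pm\epsilon)|\mathcal{A}\cup\mathcal{B}|$. Therefore the extra error is at most $\epsilon|Z|$, and on the event that $Z$ is close to $\langle \bv{a},\bv{b}\rangle$ this is roughly $\epsilon|\langle \bv{a},\bv{b}\rangle|$. Since $|\langle \bv{a},\bv{b}\rangle| = \big|\sum_{j\in\mathcal{A}\cap\mathcal{B}}\bv{a}[j]\bv{b}[j]\big| \le c^2|\mathcal{A}\cap\mathcal{B}| \le c^2\sqrt{\max(|\mathcal{A}|,|\mathcal{B}|)\,|\mathcal{A}\cap\mathcal{B}|}$ (using $|\mathcal{A}\cap\mathcal{B}|\le\max(|\mathcal{A}|,|\mathcal{B}|)$), this second error source is absorbed into the target bound up to constants that can be folded into $\epsilon$ by adjusting the hidden constant in $m$. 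A union bound over the two constant-probability events (concentration of $Z$ and accuracy of $\tilde U$) then shows the full estimator is within $O(\epsilon)\cdot c^2\sqrt{\max(|\mathcal{A}|,|\mathcal{B}|)\,|\mathcal{A}\cap\mathcal{B}|}$ with constant probability.

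Finally, to upgrade the constant success probability to $1-\delta$ and obtain the stated $m = O(\log(1/\delta)/\epsilon^2)$, I would apply the standard median-of-means trick: partition the $m$ hash functions into $t = O(\log(1/\delta))$ independent groups of size $O(1/\epsilon^2)$, run the above estimator on each group, and return the median of the $t$ estimates, with a Chernoff bound over the groups yielding the high-probability guarantee. The main obstacle I anticipate is bookkeeping the two coupled error sources cleanly: the multiplicative error from $\tilde U$ and the additive fluctuation of the main sum share the same hash functions, so I must ensure both hold simultaneously within each group and that the $\tilde U$ error is genuinely lower-order rather than inflating the target. The boundedness hypothesis $|\bv{a}[j]|,|\bv{b}[j]|\le c$ is precisely what makes the per-term bound $\bv{a}[j]^2\bv{b}[j]^2\le c^4$ work; removing it is what motivates the weighted sampling scheme of the main result.
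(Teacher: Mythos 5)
Your proposal is correct and follows essentially the same route as the paper's proof: the same idealized estimator with the true union size $|\mathcal{A}\cup\mathcal{B}|$, the same expectation and variance computation followed by Chebyshev, the same use of \Cref{lem:distinct_elem} with constant failure probability plus a triangle-inequality argument (via $|\langle \bv{a},\bv{b}\rangle| \leq c^2|\mathcal{A}\cap\mathcal{B}|$) to absorb the multiplicative $\tilde U$ error, and the same median trick over $O(\log(1/\delta))$ independent sketch blocks. The only cosmetic difference is that you apply the bound $|\mathcal{A}\cup\mathcal{B}|\leq 2\max(|\mathcal{A}|,|\mathcal{B}|)$ inside the variance bound rather than at the end, which changes nothing.
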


The full proof of \Cref{thm:bounded} is included in \Cref{app:unweighted_proof}. It requires two technical ingredients. First, we must bound the variance of an ``ideal'' estimator that uses the exact value of $|\mathcal{A}\cup \mathcal{B}|$. This can be done by using the fact that  $\bv a$ and $\bv b$ have entries bounded in $[-c,c]$. Second, we can bound the error introduced by replacing $|\mathcal{A}\cup \mathcal{B}|$ with an estimate for the union, as discussed above. To do so, we rely on the following standard result, which shows that MinHash sketches for $\bv{a}$ and $\bv{b}$ can be used to compute a $(1\pm\epsilon)$ relative error approximation to the true union $|\mathcal{A}\cup \mathcal{B}|$ when $m = O(1/\epsilon^2)$:
\begin{lemma}[Union Size Estimator \cite{BlumHopcroftKannan:2020}] 
	\label{lem:distinct_elem}
	Let $\mathcal{A}$ and $\mathcal{B}$ be non-empty subsets of $\{1, \ldots, n\}$ and let $h^1, \ldots, h^m: \{1, \ldots, n\}\rightarrow [0,1]$ be independent, uniform random hash functions. 
For any $\epsilon, \delta \in (0,1)$, if $m = O \left (\frac{1}{\delta\epsilon^2}\right )$, then with prob. at least $1-\delta$, the estimator $\tilde{U}= \frac{m}{\sum_{i=1}^m\min_{j\in \mathcal{A}\cup\mathcal{B}} h^i(j)} - 1$ satisfies:
	\begin{align*}
		(1-\epsilon)|\mathcal{A}\cup \mathcal{B}| \leq  \tilde{U} \leq 	(1+\epsilon)|\mathcal{A}\cup \mathcal{B}|.
	\end{align*}
\end{lemma}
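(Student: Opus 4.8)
The plan is to reduce the claim to a standard concentration argument for the sum of the per-hash minima, and then transfer that concentration through the reciprocal that defines $\tilde U$. Write $U = |\mathcal{A}\cup\mathcal{B}|$ and, for each $i$, let $X_i = \min_{j\in\mathcal{A}\cup\mathcal{B}} h^i(j)$. Since $h^i$ assigns the $U$ elements of $\mathcal{A}\cup\mathcal{B}$ independent $\unif[0,1]$ values, $X_i$ is exactly the minimum of $U$ i.i.d.\ uniform variables, i.e.\ $X_i \sim \text{Beta}(1,U)$ with $\Pr[X_i > t] = (1-t)^U$, and the $X_i$ are independent across $i$. This is the only place the structure of the hash functions enters; everything afterward is a statement about i.i.d.\ Beta variables.

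Next I would record the first two moments. From the Beta$(1,U)$ distribution we have $\E[X_i] = \frac{1}{U+1}$ and $\Var[X_i] = \frac{U}{(U+1)^2(U+2)}$. Setting $S = \sum_{i=1}^m X_i$, independence gives $\E[S] = \frac{m}{U+1}$ and $\Var[S] = \frac{mU}{(U+1)^2(U+2)}$, so the squared coefficient of variation is $\frac{\Var[S]}{\E[S]^2} = \frac{U}{m(U+2)} \le \frac{1}{m}$, uniformly in $U$. Applying Chebyshev's inequality, $\Pr[\,|S - \E[S]| \ge t\,\E[S]\,] \le \frac{1}{t^2 m}$; choosing $m = C/(\delta\epsilon^2)$ for a sufficiently large constant $C$ and $t = c\epsilon$ for a small constant $c$ makes the right-hand side at most $\delta$. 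Hence with probability at least $1-\delta$ we have $(1-c\epsilon)\frac{m}{U+1} \le S \le (1+c\epsilon)\frac{m}{U+1}$.

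Finally I would invert. On this event $\frac{m}{S} \in \left[\frac{U+1}{1+c\epsilon},\, \frac{U+1}{1-c\epsilon}\right]$, so $\tilde U = \frac{m}{S} - 1$ lies in $\left[\frac{U - c\epsilon}{1+c\epsilon},\, \frac{U + c\epsilon}{1-c\epsilon}\right]$. Using $\frac{1}{1\pm c\epsilon} = 1 \mp \Theta(c\epsilon)$ together with the hypothesis that $\mathcal{A},\mathcal{B}$ are non-empty, so that $U \ge 1$, elementary algebra shows that for $c$ a small enough constant both endpoints are contained in $[(1-\epsilon)U,\, (1+\epsilon)U]$; the stray additive $c\epsilon$ terms are dominated because the target is $U \ge 1$. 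Folding $c$ into the constant $C$ in $m = O(1/(\delta\epsilon^2))$ yields the stated guarantee.

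The genuinely delicate step is this last one: transferring multiplicative concentration of $S$ through the nonlinear map $S \mapsto m/S - 1$. Two features must be reconciled. First, the estimator targets $U$, but $\E[S] = m/(U+1)$ involves the shifted quantity $U+1$, so the ``$-1$'' correction is not cosmetic — it must interact correctly with the relative error budget. Second, a reciprocal can amplify error when $S$ is small, and this is controlled precisely by the lower bound $U \ge 1$ coming from non-emptiness. Getting the algebra of these two points to close with clean constants is where the care lies; the moment computation and the Chebyshev bound are routine by comparison.
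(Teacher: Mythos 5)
Your proof is correct, but there is nothing in the paper to compare it against: the paper does not prove \Cref{lem:distinct_elem} at all, it imports the statement as a black box from the Blum--Hopcroft--Kannan textbook. Your argument is the standard self-contained derivation one would write for it: identify $X_i = \min_{j \in \mathcal{A}\cup\mathcal{B}} h^i(j)$ as the minimum of $U = |\mathcal{A}\cup\mathcal{B}|$ i.i.d.\ uniforms (Beta$(1,U)$), compute $\E[S] = \frac{m}{U+1}$ and $\frac{\Var[S]}{\E[S]^2} \le \frac{1}{m}$ for $S = \sum_i X_i$, apply Chebyshev, and invert. The step you flag as delicate does close cleanly: writing the two required inequalities $\frac{U - c\epsilon}{1+c\epsilon} \ge (1-\epsilon)U$ and $\frac{U + c\epsilon}{1-c\epsilon} \le (1+\epsilon)U$ and dividing through by $\epsilon$, they reduce to $U(1 + c\epsilon - c) \ge c$ and $U(1 - c - c\epsilon) \ge c$ respectively, and both hold for $c = 1/3$ using only $U \ge 1$ and $\epsilon < 1$; taking $C \ge 1/c^2 = 9$ then makes the Chebyshev failure probability at most $\delta$. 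One structural observation worth making explicit: the polynomial $1/\delta$ dependence in $m = O\left(\frac{1}{\delta\epsilon^2}\right)$ (rather than $\log(1/\delta)$) is precisely the signature of a single-shot Chebyshev argument, so your proof matches the form of the stated guarantee exactly; the paper compensates for this weak dependence downstream by invoking the lemma with constant $\delta$ and boosting via the median trick in the proofs of \Cref{thm:bounded} and \Cref{thm:main}.
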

Note that, while it is written in a slightly different way, the $\tilde{U}$ in \Cref{lem:distinct_elem} is exactly equivalent to the $\tilde{U}$ in \Cref{alg:minhash_est}  (when $\mathcal{A}$ and $\mathcal{B}$ contain the non-zero indices of $\bv{a}$ and $\bv{b}$). To see why this is the case, note that $H_{\bv{a}}^{hash}[i] = \min_{j\in \mathcal{A}} h^i(j)$ and $H_{\bv{b}}^{hash}[i] = \min_{j\in \mathcal{B}} h^i(j)$. So $\min\left( H_{\bv{a}}^{hash}[i],H_{\bv{b}}^{hash}[i]\right) = \min_{j\in \mathcal{A}\cup\mathcal{B}} h^i(j)$. 

\section{Main Result: Weighted MinHash}
\label{sec:main_result}
The main technical challenge in our work is extending the results of the previous section (\Cref{thm:bounded}) to vectors whose entries have \emph{highly varying magnitude}. It is not hard to see that the simple MinHash method fails for such vectors. For example, consider the extreme case when $\bv{a}$ and $\bv{b}$ both contain a very large values at some index $i$, so large that the term $\bv{a}[i]\bv{b}[i]$ dominates the inner product $\langle \bv{a},\bv{b}\rangle = \sum_{k=1}^n \bv{a}[k]\bv{b}[k]$. To correctly approximate the inner product, we \emph{need} to include $\bv{a}[i]$ and $\bv{b}[i]$ in our sketches for $\bv{a}$ and $\bv{b}$, respectively. A MinHash sketch will only do so with low probability, since it uniformly samples entries from the intersection of the vectors. Thus, it will obtain a poor estimate for $\langle \bv{a},\bv{b}\rangle$.

To address the issue with \emph{heavy} entries, we modify the approach of \Cref{sec:unweighted-minhash} to incorporate \emph{non-uniform} sampling weights using a Weighted MinHash sketch \cite{ManasseMcSherryTalwar:2010}. This allows us to sample high magnitude entries in the vectors with higher probability. Specifically, our goal is to sample the $i^\text{th}$ entry of $\bv{a}$ with probability proportional to the squared magnitude, $\bv{a}[i]^2$.
The Weighted MinHash sketch achieves non-uniform sampling in a simple way: we construct an \emph{extended} vector $\bv{\bar{a}}$ which has the same entries as $\bv{a}$, but entries are repeated multiple times, with the exact number of repetitions proportional to their magnitude. We then apply the standard MinHash sketch to $\bv{\bar a}$. This approach is detailed  in \Cref{alg:weighted_sketch}. 

\setlength{\textfloatsep}{4pt}
\begin{algorithm}[t]\caption{Weighted MinHash Sketch}\label{alg:weighted_sketch}
	\begin{algorithmic}[1]
		\Require Length $n$ vector $\bv{a}$, sample number $m$, random seed $s$, integer discretization parameter $L$.
		\Ensure Sketch $W_{\bv{a}} = \{W_{\bv{a}}^{hash}, W_{\bv{a}}^{val}, \|\bv{a}\|\}$, where $W_{\bv{a}}^{hash}$ is a length $m$ vector of values in $[0,1]$, $W_{\bv{a}}^{val}$ is a length $m$ vector containing a subset of entries from $\bv{a}$, and $\|\bv{a}\|$ is a scalar, the Euclidean norm of $\bv a$.
		\algrule
		\State Initialize random number generator with seed $s$.
		\State Set $\tilde{\bv{a}} =  \textsc{Round}(\bv{a}/\|\bv{a}\|$, L) using \cref{alg:round}.
		\State For each $i \in \{1, \ldots, n\}$, let $\bar{\bv{a}}^{(i)}$ be a length $L$ vector whose first $\tilde{\bv{a}}[i]^2 \cdot L$  entries are set to $\tilde{\bv a}[i]$. Set the remaining entries  to $0$.
		\State Let $\bv{\bar{a}} = [\bar{\bv{a}}^{(1)}, \ldots, \bar{\bv{a}}^{(n)}]$ be a length $n\cdot L$ vector obtained by concatenating the vectors defined above.
		
		\For{i = 1, \ldots, m}
		\State Select uniform random hash func. $h^i: \{1,..., nL\}\rightarrow [0,1]$. 
		\State Compute $j^* = \argmin_{j \in \{1, \ldots, n\cdot L\},\, \bar{\bv{a}}[j] \neq 0} h^i(j)$. 
		\State Set $W_{\bv{a}}^{hash}[i] = h^i(j^*)$ and $W_{\bv{a}}^{val}[i] = \bar{\bv{a}}[j^*]$. 
		\EndFor
		\State \Return $\{W_{\bv{a}}^{hash}, W_{\bv{a}}^{val}, \|\bv{a}\|\}$.
	\end{algorithmic}
    \vspace{-.25em}
\end{algorithm}
\begin{algorithm}[t]\caption{Vector Rounding for Weighted MinHash}
    \label{alg:round}
    \vspace{-.25em}
	\begin{algorithmic}[1]
		\Require Length $n$ unit vector $\bv{z}$, integer discretization parameter $L$.
		\Ensure Length $n$ unit vector $\bv{\tilde z}$ with $\bv{\tilde z}[i]^2$ an integer multiple of $1/L$ for all $i$.
		\algrule
		\State For all $i \in \{1, \ldots, n \}$, $\bv{\tilde z}[i] = \sign(\bv{z}[i]) \cdot \sqrt{\frac{\lfloor {\bv{z}}[i]^2 \cdot L \rfloor}{L}}$.
		\State Let $i^* = \argmax_{i \in 1,\ldots,n} |\bv{z}[i]|$. 
		\State Fix $\delta = 1-\norm{\bv{\tilde z}}^2$, then set $\bv{\tilde z}[i^*] = \sign(\bv z[i^*]) \cdot \sqrt{\bv{\tilde z}[i^*]^2  + \delta}$.
		\State \Return $\bv{\tilde z}$.
	\end{algorithmic}
\end{algorithm}

\myparagraph{Rounding \& Normalization} While Weighted MinHash allows us to  sample entries with non-uniform probability, another challenge arises: since sketches for $\bv{a}$ and $\bv{b}$ are computed independently, we no longer sample with the \emph{same probability} from both vectors. For $\bv{b}$, Weighted MinHash samples indices with probability proportional to $\bv{b}[i]^2$ instead of $\bv{a}[i]^2$. This mismatch can actually \emph{reduce} the probability that we select entries from $\bv{a}$ and $\bv{b}$ with the same index.

We are able to balance this issue with a normalization strategy. In particular, line 2 in \Cref{alg:weighted_sketch} performs a simple but important preprocessing step that \emph{scales} and \emph{rounds} $\bv{a}$ to a unit vector $\bv{\tilde{a}}$ whose squared entries are all integer multiples of $1/L$ for some large integer $L$ (to be chosen later). The rounding handles a minor issue: since we control the frequency with which each entry $\bv{a}[i]$ is sampled by \emph{repetition}, we need the squared value of all entries to be integer multiples of the same fixed constant in order to sample precisely with probability proportional to $\bv{a}[i]^2$. As will be proven, $L$ can be chosen so that the discretization has little impact on the accuracy of our final inner product estimate, and the parameter also has no impact on the size of the sketch returned by \Cref{alg:minhash_sketch}.\footnote{Note that our rounding method (\Cref{alg:round}) is non-standard: It rounds all entries of the input vector {down} to smaller magnitude values, except for the \emph{largest} magnitude entry in the vector, which gets rounded up. 
This scheme allows us to achieve small \emph{relative error} when rounding
and to avoid additive error depending on $1/L$.}

The scaling is what deals with the bigger issue discussed above, which is the mismatch in sampling probabilities between $\bv{a}$ and $\bv{b}$. Surprisingly, we can show that the impact of this mismatch can be controlled when $\|\bv{a}\| = \|\bv{b}\|$. So while it is possible to come up with examples where the algorithm fails if we directly sketch $\bv{a}$ and $\bv{b}$, we can obtain a worst-case bound by sketching $\bv{a}/\|\bv{a}\|$ and $\bv{b}/\|\bv{b}\|$, approximating $\langle \bv{a}/\|\bv{a}\|, \bv{b}/\|\bv{b}\| \rangle$, and then post-multiplying the result by $\|\bv{a}\|\|\bv{b}\|$ to get our final estimator.

\begin{algorithm}[t]\caption{Weighted MinHash Estimate}\label{alg:weight_est}
	\begin{algorithmic}[1]
		\Require Sketches $W_{\bv{a}} = \{W_{\bv{a}}^{hash}, W_{\bv{a}}^{val}, \|\bv{a}\|\}$ and $W_{\bv{b}} = \{W_{\bv{b}}^{hash}, W_{\bv{b}}^{val}, \|\bv{b}\|\}$ constructed using \Cref{alg:weighted_sketch} with the same inputs $m$, $s$, and  $L$.
		\Ensure Estimate of $\langle \bv{a}, \bv{b}\rangle$.
		\algrule
		\State For $i \in \{1, \ldots, m \}$, set $q_i = \min\left(W_{\bv{a}}^{val}[i]^2,W_{\bv{b}}^{val}[i]^2\right)$.
		\State Set $\tilde{M} = \frac{1}{L}\cdot  \left(\frac{m}{\sum_{i=1}^m\min\left(W_{\bv{a}}^{hash}[i],W_{\bv{b}}^{hash}[i]\right)} - 1\right)$.
		\State Set $I = \frac{\tilde{M}}{m} \sum_{i=1}^m \mathbbm{1}\left[W_{\bv{a}}^{hash}[i]  = W_{\bv{b}}^{hash}[i]\right]\cdot \frac{W_{\bv{a}}^{val}[i] \cdot W_{\bv{b}}^{val}[i] }{q_i}.$
		\State \Return $\|\bv{a}\|\|\bv{b}\|\cdot I$ \label{alg_weight_est_return}
	\end{algorithmic}
\end{algorithm} 

\myparagraph{Deriving the Inner Product Estimator} 
We next motivate \Cref{alg:minhash_est}, which is the algorithm used to estimate $\langle \bv{a}, \bv{b}\rangle$ from our sketches.
Note that Weighted MinHash Sketch (\cref{alg:weighted_sketch}) in fact returns an Unweighted MinHash Sketch  (\cref{alg:minhash_sketch}) for the expanded vectors $\bv{\bar a}, \bv{\bar b}$. So, we can apply \Cref{fact:mhash_sketch} to obtain the following:
\begin{fact}
	\label{fact:mhash_sketchWeighted}
	Consider vectors $\bv{a}$ and $\bv{b}$ sketched using \Cref{alg:weighted_sketch} to produce $W_{\bv{a}}$ and $W_{\bv{b}}$. Define $\mathcal{A}$ and $\mathcal{B}$ as in \Cref{fact:mhash_sketch}. For all $i\in\{1, \ldots, m\}$ we have:
	\begin{enumerate}
		\item $W_{\bv{a}}^{hash}[i] = W_{\bv{b}}^{hash}[i]$ with probability equal to the 
		weighted Jaccard similarity, $\bar{J} = \frac{\sum_{j=1}^n \min(\tilde{\bv a}[j]^2,\tilde{\bv b}[J]^2)}{\sum_{j =1}^n \max(\tilde{\bv a}[j]^2,\tilde{\bv b}[j]^2)}$.
		\item If $W_{\bv{a}}^{hash}[i] = W_{\bv{b}}^{hash}[i]$, then we have that $W_{\bv a}^{val}= \bv{\tilde a}[j]$ and  $W_{\bv b}^{val}= \bv{\tilde b}[j]$ for $j$ chosen from $\mathcal{A} \cap \mathcal{B}$ with probability equal to 
		$\min(\tilde{\bv a}[j]^2,\tilde{\bv b}[j]^2)/\sum_{i=1}^{n} \max(\tilde{\bv a}[j]^2,\tilde{\bv b}[j]^2)$.
	\end{enumerate}
\end{fact}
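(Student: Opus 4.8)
The plan is to reduce everything to \Cref{fact:mhash_sketch} applied to the expanded vectors $\bv{\bar a}$ and $\bv{\bar b}$, exploiting the observation (noted just above the statement) that the Weighted MinHash sketch of $\bv a$ is literally the Unweighted MinHash sketch of $\bv{\bar a}$ over the coordinate set $\{1,\ldots,nL\}$. First I would set up the bookkeeping: write $\bar{\mathcal A}$ and $\bar{\mathcal B}$ for the supports of $\bv{\bar a}$ and $\bv{\bar b}$, and identify each coordinate $j \in \{1,\ldots,nL\}$ with a pair $(i,\ell)$, where $i \in \{1,\ldots,n\}$ indexes the block $\bar{\bv a}^{(i)}$ and $\ell \in \{1,\ldots,L\}$ indexes the position within it. By construction $(i,\ell) \in \bar{\mathcal A}$ iff $\ell \le \tilde{\bv a}[i]^2 \cdot L$ and $(i,\ell) \in \bar{\mathcal B}$ iff $\ell \le \tilde{\bv b}[i]^2 \cdot L$. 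Since \Cref{alg:round} returns unit vectors whose squared entries are integer multiples of $1/L$, each block contributes an integer number of nonzero positions and $|\bar{\mathcal A}| = |\bar{\mathcal B}| = L$.

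The crucial structural observation is that within each block the nonzero positions \emph{nest}: both vectors fill positions starting from $\ell = 1$, so the nonzero positions of $\bv{\bar a}$ and $\bv{\bar b}$ in block $i$ are nested initial segments. Hence block $i$ contributes exactly $\min(\tilde{\bv a}[i]^2, \tilde{\bv b}[i]^2)\cdot L$ coordinates to $\bar{\mathcal A}\cap\bar{\mathcal B}$ and $\max(\tilde{\bv a}[i]^2,\tilde{\bv b}[i]^2)\cdot L$ coordinates to $\bar{\mathcal A}\cup\bar{\mathcal B}$. Summing over blocks gives $|\bar{\mathcal A}\cap\bar{\mathcal B}| = L\sum_j \min(\tilde{\bv a}[j]^2,\tilde{\bv b}[j]^2)$ and $|\bar{\mathcal A}\cup\bar{\mathcal B}| = L\sum_j \max(\tilde{\bv a}[j]^2,\tilde{\bv b}[j]^2)$. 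Part 1 follows immediately from claim (1) of \Cref{fact:mhash_sketch}, since the collision probability is $|\bar{\mathcal A}\cap\bar{\mathcal B}|/|\bar{\mathcal A}\cup\bar{\mathcal B}|$ and the factor $L$ cancels, leaving exactly $\bar J$.

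For part 2, I would invoke claim (2) of \Cref{fact:mhash_sketch}: conditioned on a collision, the selected expanded coordinate $j^*$ is uniform over $\bar{\mathcal A}\cap\bar{\mathcal B}$. Pushing this uniform distribution forward through the map $(i,\ell)\mapsto i$, the probability that $j^*$ lies in block $i$ is proportional to the number of intersection coordinates in that block, namely $\min(\tilde{\bv a}[i]^2,\tilde{\bv b}[i]^2)\cdot L$; normalizing yields exactly the stated probability $\min(\tilde{\bv a}[i]^2,\tilde{\bv b}[i]^2)/\sum_{i'\in\mathcal A\cap\mathcal B}\min(\tilde{\bv a}[i']^2,\tilde{\bv b}[i']^2)$, where indices with zero min receive zero weight, so restricting the sum to $\mathcal A\cap\mathcal B$ is harmless. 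Finally, whenever $j^*$ falls in block $i$ the block definition forces $\bar{\bv a}[j^*] = \tilde{\bv a}[i]$ and $\bar{\bv b}[j^*] = \tilde{\bv b}[i]$, which is precisely the claimed identity $W_{\bv a}^{val} = \tilde{\bv a}[i]$ and $W_{\bv b}^{val} = \tilde{\bv b}[i]$.

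I do not anticipate a serious obstacle; the only step needing care is the nesting argument that converts coordinate counts into the $\min$ and $\max$ of the squared entries — this is exactly what makes the weighted Jaccard similarity $\bar J$ emerge — together with confirming that \Cref{alg:round} preserves unit norm so that $|\bar{\mathcal A}| = |\bar{\mathcal B}| = L$ and all block counts are integers. Everything else is a direct translation of \Cref{fact:mhash_sketch} back through the expansion.
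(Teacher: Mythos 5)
Your proposal is correct and follows essentially the same route as the paper's proof: reduce to \Cref{fact:mhash_sketch} applied to the expanded vectors, count intersection and union coordinates block-by-block via the nesting of initial segments to get $|\bar{\mathcal A}\cap\bar{\mathcal B}| = L\sum_j \min(\tilde{\bv a}[j]^2,\tilde{\bv b}[j]^2)$ and $|\bar{\mathcal A}\cup\bar{\mathcal B}| = L\sum_j \max(\tilde{\bv a}[j]^2,\tilde{\bv b}[j]^2)$, then push the sampling distribution forward through the block map. The only cosmetic difference is that you work with the conditional (given a collision, $j^*$ is uniform on $\bar{\mathcal A}\cap\bar{\mathcal B}$) form of claim (2) directly, whereas the paper first restates claim (2) unconditionally and normalizes by $\sum_k\max(\tilde{\bv a}[k]^2,\tilde{\bv b}[k]^2)$; the two are equivalent.
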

A proof of \Cref{fact:mhash_sketchWeighted} is given in \cref{app:weighted_proof}. With the statement in place, we present our procedure  for estimating $\langle \bv{a}, \bv{b}\rangle$ based the sketches computed by \Cref{alg:weighted_sketch}. This procedure, shown in \Cref{alg:weight_est}, is reminiscent of our estimator for unweighted sketches from the previous section. 
The only difference is that, since we are sampling with non-uniform probabilities, we need to inversely weight samples in our sum to keep everything correct in expectation.
In particular, consider the sum 
in line 3 of the algorithm.\\
By \cref{fact:mhash_sketchWeighted} and linearity of expectation, we have that:
\vspace{-.25em}
\begin{align*}
	&\E\left[\sum_{i=1}^m \mathbbm{1}\left[W_{\bv{a}}^{hash}[i]  = W_{\bv{b}}^{hash}[i]\right]\cdot \frac{W_{\bv{a}}^{val}[i] \cdot W_{\bv{b}}^{val}[i] }{q_i}\right]  \\&
	m\cdot \E\left[\mathbbm{1}\left[W_{\bv{a}}^{hash}[i]  = W_{\bv{b}}^{hash}[i]\right]\right]\cdot \frac{W_{\bv{a}}^{val}[i] \cdot W_{\bv{b}}^{val}[i] }{q_i}\\
	&= m\cdot  \sum_{j\in \mathcal{A}\cap \mathcal{B}} \frac{q_j}{\sum_{i=1}^n \max(\tilde{\bv a}[i]^2,\tilde{\bv b}[i]^2)} \frac{\tilde{\bv a}[j]\tilde{\bv b}[j]}{q_j} \\
	&= \frac{m}{\sum_{i=1}^n \max(\tilde{\bv a}[i]^2,\tilde{\bv b}[i]^2)}\cdot \langle \tilde{\bv a}, \tilde{\bv b}\rangle. 
\end{align*}
\vspace{-.25em}
So, we have obtained an estimator that in expectation is equal to $\langle \tilde{\bv a}, \tilde{\bv b}\rangle$, multiplied by $m$ over a term $M =\sum_{i=1}^n \max(\tilde{\bv a}[j]^2,\tilde{\bv b}[j]^2)$. This term $M$ is referred to as the  \emph{weighted} union size between the vectors. We can multiply by $\frac{M}{m}$ to obtain an unbiased estimator for $\langle \tilde{\bv a}, \tilde{\bv b}\rangle$. Since $\tilde{\bv{a}}$ and $\tilde{\bv{b}}$ were obtained by scaling $\bv{a}$ and $\bv{b}$ inversely by their Euclidean norms (ignoring the effect of rounding for now), our final estimator in Line~\ref{alg_weight_est_return} of \Cref{alg:weight_est} multiplies by $\|\bv{a}\|\|\bv{b}\|$. The values of $\|\bv{a}\|$ and $\|\bv{b}\|$ are stored explicitly in the sketches for $\bv{a}$ and $\bv{b}$, respectively (as just one extra number per sketch).

The formal analysis of \Cref{alg:weight_est}, which yields \Cref{thm:main}, is included in \Cref{app:weighted_proof}.
It contains three parts. First, when analyzing the unweighted estimator, we do not know $M$ exactly, so must estimate it. We can take advantage of the fact that $M$ is exactly equal to the \emph{unweighted} union size $|\bar{\mathcal{A}} \cup \bar{\mathcal{B}}|$ between the non-zero index sets  $\bar{\mathcal{A}}$ and $\bar{\mathcal{B}}$ of the \emph{expanded vectors} $\bar{\bv{a}}$ and $\bar{\bv{b}}$ constructed in \Cref{alg:weighted_sketch}. We can apply \Cref{lem:distinct_elem} directly to obtain an estimator, which is denoted as $\tilde{M}$ in \Cref{alg:weight_est}. 
Second, we need to analyze the variance of the sum $\sum_{i=1}^m \mathbbm{1}\left[W_{\bv{a}}^{hash}[i]  = W_{\bv{b}}^{hash}[i]\right]\cdot \frac{W_{\bv{a}}^{val}[i] \cdot W_{\bv{b}}^{val}[i] }{q_i}$. This analysis
uses the fact that 
$\tilde{\bv{a}}$ and $\tilde{\bv{b}}$ are unit vectors. Third, we need to rigorously analyze the impact of the rounding procedure performed in Line 2 of \Cref{alg:weighted_sketch} to establish that a good estimate for $\langle \tilde{\bv a}, \tilde{\bv b}\rangle$ actually yields a good estimate for $\langle \bv{a}/\|\bv{a}\|, \bv{b}/\|\bv{b}\|\rangle = \frac{1}{\|\bv{a}\|\|\bv{b}\|} \langle \bv{a}, \bv{b}\rangle$.

We conclude by noting that our final analysis of \Cref{alg:weight_est} requires setting $L$ to be on the order of $n^6/\epsilon^2$ when sketching using \Cref{alg:weighted_sketch}. This may sound large, but note that the parameter has \emph{no impact} on the size of the sketches returned by \Cref{alg:weighted_sketch}, or on the runtime of our estimation procedure \Cref{alg:weight_est}. $L$ does impact the {runtime} of  \Cref{alg:weighted_sketch}, but as discussed in \cref{sec:exp}, prior work can be used to implement the Weighted MinHash sketching method so that it has a \emph{logarithmic} dependence on $L$ -- i.e., on $O(\log(n/\epsilon))$.

\section{Experiments}
\label{sec:exp}
To support the results presented in \cref{sec:main_result},
we performed an experimental evaluation using synthetic data and real-world datasets.

\myparagraph{Baselines}
We compare our Weighted MinHash approach against 4 baseline methods, 2 linear and 2 sampling-based, with the goal of evaluating the trade-off between sketch size and accuracy in estimating inner products.  Those methods are:
\smallskip \vspace{-.5em}
\begin{description}[style=unboxed,leftmargin=0cm]
	\item[\em Johnson-Lindenstrauss Projection (JL):] equivalent to the AMS  sketch \cite{AlonMatiasSzegedy:1999,Achlioptas:2003}. Uses a random matrix $\bs{\Pi}$ with scaled $\pm 1$ entries (\Cref{fact:jl_result}).
	\item[\em CountSketch (CS):] classic linear sketch introduced in \cite{CharikarChenFarach-Colton:2002}, and corresponds to multiplication with a $\bs{\Pi}$ that has sparse random entries. We follow the implementation in \cite{LarsenPaghTetek:2021}, using $5$ repetitions of the sketch and taking the median to improve performance.
	\item[\em MinHash Sampling (MH):] method described in \Cref{alg:minhash_sketch}; we use a single sketch without any median estimate.  
	\item[\em $k$-Minimum Values Sampling (KMV):] sampling-based sketch closely related to MinHash, but it draws samples from the vector being sketched \emph{without replacement}. It can also be used to estimate union size. We follow the implementations from \cite{BeyerHaasReinwald:2007} and \cite{SantosBessaChirigati:2021}.
	\item[\em Weighted MinHash Sampling (WMH):] our method described in \Cref{alg:weighted_sketch}; we use a single sketch without any median estimate. 
\end{description}
\vspace{-.25em}
\myparagraph{Storage Size}
For linear sketches, we store the output of the matrix multiplication $\bs{\Pi}\bv{a}$ as 64-bit doubles. We also store $W_\bv{a}^{val}$ and $H_\bv{a}^{val}$ as 64-bit doubles. Since sampling-based sketches need to store hash values (which in our case are 32-bit ints), a sampling-based sketch with $m$ samples takes $1.5x$ as much space as a JL sketch with $m$ rows.  In our experiments, we plot \emph{storage size} which denotes the total number of bits in the sketch divided by 64,  i.e., the total number of 64-bit doubles (or equivalent) used in the sketch. Standard quantization tricks could likely be used to reduce the size of numbers in all sketches (linear and sampling), but we leave the development of such methods to future work. As a starting point, we note that there has already been interesting work on quantized JL projections \cite{Jacques:2015, LiMitzenmacherSlawski:2016}, and the SimHash method for estimating cosine similarity can be viewed as a ``1-bit'' quantization of a JL sketch \cite{Charikar:2002}. 

\myparagraph{Estimation Error} For all plots, we report the absolute difference between $\langle \bv{a}, \bv{b}\rangle$ and the estimate, divided by $\|\bv{a}\|\|\bv{b}\|$. This is the term appearing on the right-hand side of the accuracy guarantee for linear sketches \Cref{fact:jl_result}, so this scaling roughly ensures that errors are between $0$ and $1$,
making it easier to compare across different datasets. We always report average error over $10$ independent trials. 

\myparagraph{Choice of $L$} 
Note that the choice of $L$ in \Cref{alg:weighted_sketch} does not impact the size of our final sketch, so in general, it should be set as large as possible. Our bounds from \Cref{lem:round} that suggest $L$ should be set $\geq n^6$ are likely loose (we did not attempt to optimize polynomial factors), but we did find that it \emph{is necessary} to at least ensure that $L > n$. Ideally it should be larger by a multiplicative factor 100 or 1000. The reason for this is that, if $\bv{a}$ is dense and is normalized to have unit norm, as in \Cref{alg:minhash_sketch},  \emph{most} of its entries could have squared value $< 1/n$ (as the average value of a squared entry in a unit norm vector is always $1/n$). If we set $L < 1/n$, then any entries with value $<1/n$ would get rounded to $0$, which could negatively impact the accuracy of an inner product estimate.

\myparagraph{Efficient Weighted Hashing} 
When $L$ is large, a {naive} implementation of \Cref{alg:weighted_sketch} would be prohibitively slow. The ``extended'' vector $\bar{\bv{a}}$ has length $n\cdot L$ and we must apply a hash function to every non-zero entry in that vector. Let $\mathcal{A} = \{i: \bv{a}[i]\neq 0\}$ as before, so $|\mathcal{A}|$ is equal to the number of non-zero values in $\bv{a}$.  If each hash computation is considered unit cost, this amounts to a runtime of $O(|\mathcal{A}|m\cdot L)$, which is too large, since $L$ is chosen larger than $n$.

Fortunately, it is possible to improve this cost to $O(|\mathcal{A}|m\cdot \log L) = O(|\mathcal{A}|m\cdot \log n) $ using  techniques for speeding up weighted MinHash sketches. Such techniques have been heavily studied in recent years \cite{Ioffe:2010,WuLiChen:2019,HaeuplerManasseTalwar:2014,Shrivastava:2016}. The savings are significant, reducing the computation cost of sketching to nearly-linear in the size of the input for each of our $m$ samples. 
Among faster methods, we specifically  employ the simple ``active index''  technique, which was first introduced in \cite{GollapudiPanigrahy:2006}. The rough idea is that, when hashing non-zero entries in a particular length $L$ block of $\bv{\bar{a}}$, there is no need to hash all non-zero indices in that block. We can skip over large sections of indices by observing that if $z$ is the minimum hash value generated so far, the next index where a lower hash value will be seen is a distributed as a \emph{geometric random variable} with parameter $z$. We can sample from the geometric distribution efficiently (e.g. using a built-in Python routine) and skip ahead to that index. It is possible to prove that the expected cost of this approach is just $O(\log L)$ per block. See the exposition in \cite{ManasseMcSherryTalwar:2010} 
for further details.

Since initially releasing this paper, we became aware of even faster implementations of weighted MinHash that reduce the runtime to $O(|\mathcal{A}| + m\log m)$, which is nearly linear in the number of non-zeros in the vector being sketched \cite{ertl2018bagminhash,christiani2020dartminhash}. Such methods should be able to be adapted for use in our inner product sketching application, although we leave further exploration to future work.

\myparagraph{Choice of Hash Function} In practice we cannot obtain a truly uniform random hash function from $\{1,\ldots, n\}$ to the reals, so we must use an approximation. In our experiments, we employ a standard 2-wise independent hash function (linear function with random coefficients) that maps from $\{1, \ldots, n\}$ to $\{1, \ldots, p\}$ for a 31-bit prime $p$ \cite{CarterWegman:1979}. \footnote{Our choice to use a 2-wise independent hash function was based on prior implementations of the weighted MinHash method \cite{WuLiChen:2020} that do so.} We then use as our hash value $h(i)/p$, which is a number between $0$ and $1$. Since $p$ is chosen to have 31 bits, we can store the value of $h(i)$ in our sketch using a standard 32-bit int.

\begin{figure}[t]
	\centering
	
	\begin{subfigure}{.8\columnwidth} 
		\centering
		\includegraphics[width=1.0\linewidth]{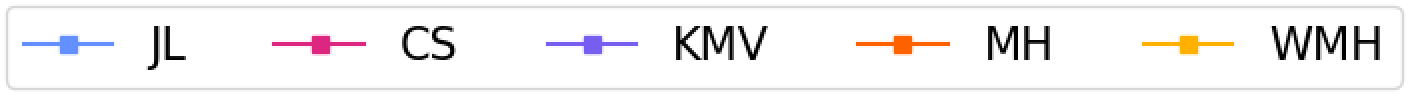}
	\end{subfigure}
	
	\begin{subfigure}{.49\columnwidth} 
		\centering
		\includegraphics[width=1.0\linewidth]{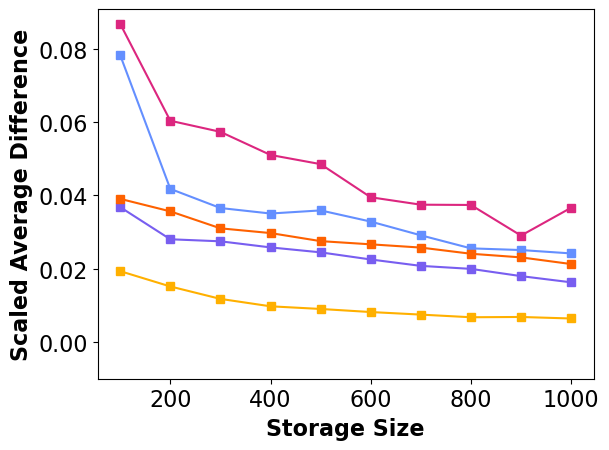}
		\vspace{-1.8em}
		
		\caption{1\% overlap}
	\end{subfigure}
	\begin{subfigure}{.49\columnwidth} 
		\centering
		\includegraphics[width=1.0\linewidth]{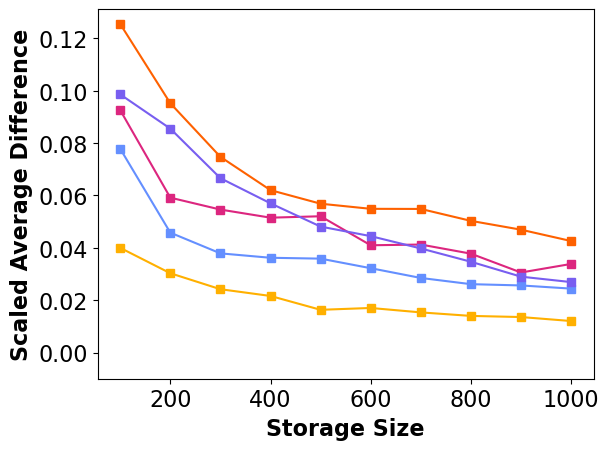}
		\vspace{-1.8em}
				
		\caption{5\% overlap}
	\end{subfigure}%
	
	\begin{subfigure}{.49\columnwidth} 
		\centering
		\includegraphics[width=1.0\linewidth]{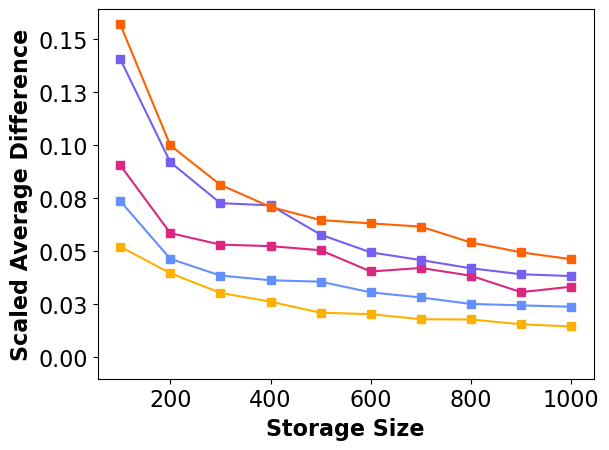}
				\vspace{-1.8em}
				
		\caption{10\% overlap}
	\end{subfigure}
	\begin{subfigure}{.49\columnwidth} 
		\centering
		\includegraphics[width=1.0\linewidth]{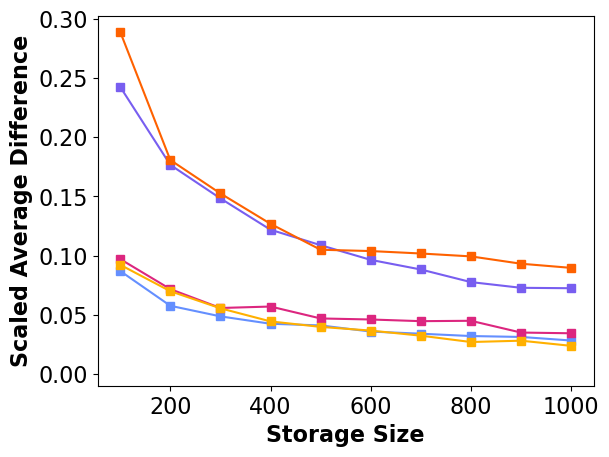}
				\vspace{-1.8em}
				
		\caption{50\% overlap}
	\end{subfigure}
    
	\caption{Inner product estimation (synthetic data).}
	\vspace{-.1em}
	\label{fig:InnerProduct_Corr098_Est}
\end{figure}


\vspace{-.2cm}
\subsection{Synthetic Data}
We begin with an evaluation of our approach using synthetic data.
%
We generate length $10000$ vectors $\bv{a}$ and $\bv{b}$, each with $2000$ non-zero entries. The ratio of non-zero entries that \emph{overlap}, i.e., are non-zero in both $\bv{a}$ and $\bv{b}$, is adjusted to simulate different practical settings with different levels of \emph{joinability} between tables (see \Cref{sec:apps}). The non-zero entries in $\bv{a}$ and $\bv{b}$ are normal random variables with values between $-1$ and $1$, 
except $10\%$ of entries are chosen randomly as outliers and set to random values between $20$ and $30$. 

Results for varying amounts of overlap are reported in \Cref{fig:InnerProduct_Corr098_Est}. They closely align with our theoretical findings: when the overlap is small, the bounds for 
Weighted MinHash
are significantly better than those of linear sketching methods. Accordingly, WMH outperforms all other methods for overlap ratio $\leq 10\%$. Note that unweighted sampling based sketches also outperform linear sketches for very low overlap ($1\%$). But as the overlap increases, the advantage brought about by \Cref{thm:main} over \Cref{fact:jl_result} decreases.  We can see this in \Cref{fig:InnerProduct_Corr098_Est}(d): at 50\% overlap, the performance of linear sketching 
is comparable to that of Weighted MinHash. 

\subsection{Real-World Data}

\myparagraph{
Assessing the Effect of Overlap and Outliers}
Using sketches of size 400,\footnote{The size was chosen empirically.
Our goal here is to simulate the real-world situation where a fixed parameter must be selected for a given application.} we estimate the inner product between 5000 pairs of numerical columns from 56 datasets published by the World Bank Group~\cite{WBF_2022}. 
We normalize columns to have norm $1$ so that all inner products have magnitude less than $1$.  
We visual results using
a winning table in \Cref{fig:real}, filting vector pairs based on different overlap ratios (column) and kurtosis values, a measure of outliers (row). Each cell shows the average error difference (WMH estimation error minus the error of other method) for vector pairs with the specified overlap and kurtosis values.

The blue cells (negative difference) correspond to combinations in which 
WMH outperforms the other methods, while the red cells (positive difference) represent combinations in which the other methods win. The darker the cells, the bigger the difference.
A high kurtosis often indicates the presence of outliers, which will, based on our theoretical results, 
present a difficulty for unweighted sampling methods like MH in comparison to JL or our WMH method. This  is supported by the experiments, which show that WMH has a great improvement over MH when kurtosis is high (up to -.031 vs. at most -.020 when kurtosis is low). 
As predicted by \Cref{thm:main} and shown in our synthetic experiments, WMH also has a great edge over JL for low overlap values. For large overlaps (greater than .75), JL leads to slightly smaller errors (from 0.003 to 0.006). 

This suggests that \emph{WMH provides a good compromise for applications in which the distribution of data is unknown: it provides much better estimates for many cases, and when it does not, its estimates are comparable to 
the best results from existing sketching methods}.

\begin{figure}[t]
	\centering
	\begin{subfigure}{0.48\columnwidth} 
		\centering
		\includegraphics[width=1\linewidth]{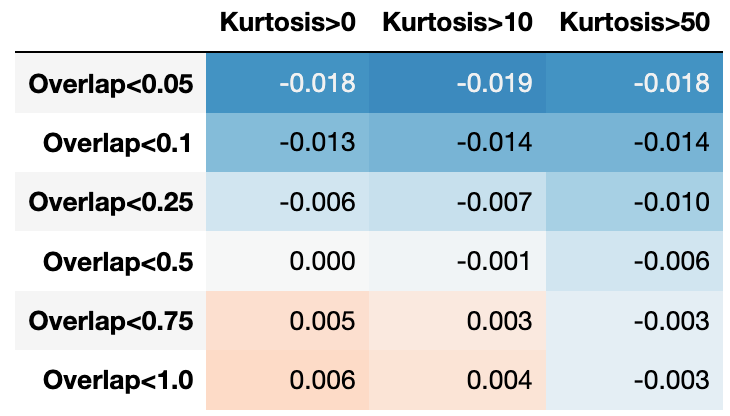}
		\vspace{-.5em}
		\caption{WMH estimation error minus JL estimation error.}
	\end{subfigure}
 \hfill
	\begin{subfigure}{0.48\columnwidth}
		\centering
		\includegraphics[width=1\linewidth]{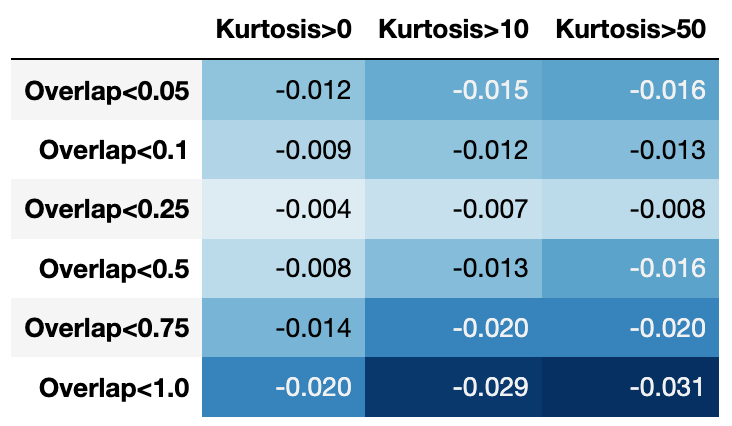}
		\vspace{-.5em}
		\caption{WMH estimation error minus MH estimation error.}
	\end{subfigure}%
	
	\vspace{-.25em}
	\caption{Inner product estimation (World Bank data). Different shades of blue highlight combinations for which WMH outperforms the other methods.}
	\label{fig:real}
\end{figure}

\begin{figure}[t]
	\centering
	
	\begin{subfigure}{.8\columnwidth} 
		\centering
		\includegraphics[width=1.0\linewidth]{figures/legend_ip.png}
	\end{subfigure}

	\begin{subfigure}{.49\columnwidth} 
		\centering
		\includegraphics[width=1.0\linewidth]{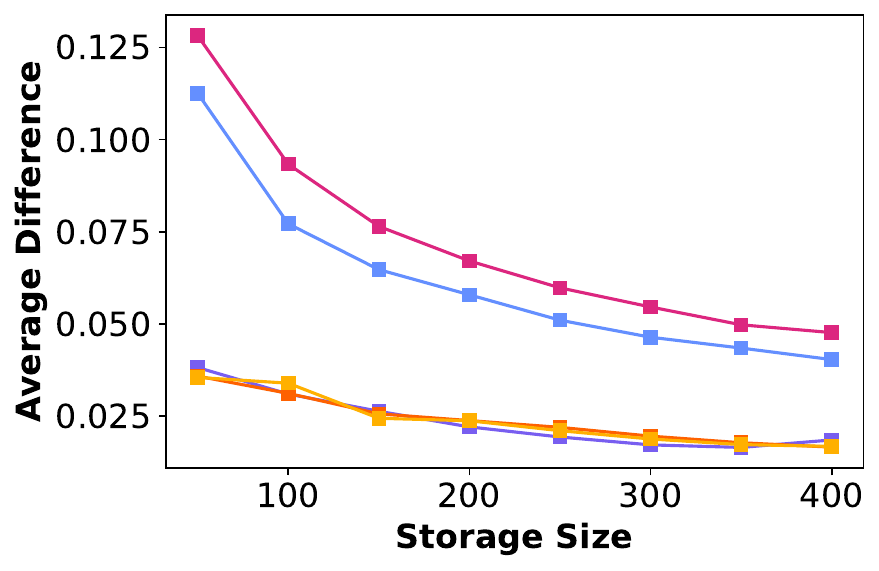}
		\vspace{-1.8em}
		\caption{All Documents}
	\end{subfigure}
	\begin{subfigure}{.49\columnwidth}
		\centering
		\includegraphics[width=1.0\linewidth]{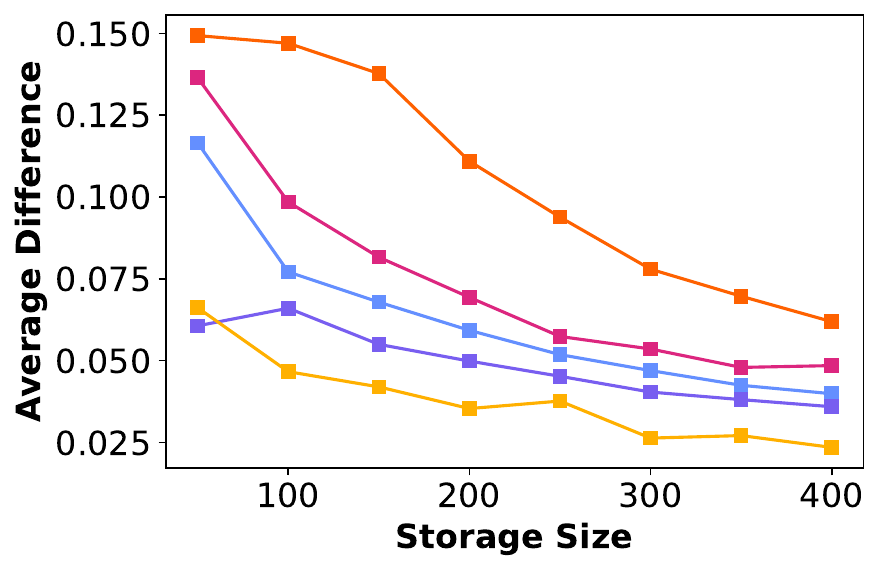}
		\vspace{-1.8em}
		\caption{Documents > 700 words}
	\end{subfigure}%
	\vspace{-.5em}
	\caption{Text similarity estimation (20 Newsgroups dataset). Note that in the left plot, the lines for MH, WMH, and KMV all lie essentially on top of one another.}
	\vspace{-.25em}
	\label{fig:tfidf}
\end{figure}

\myparagraph{
Document Similarity Estimation}
We also evaluated the performance of WMH sketches for text similarity estimation using the \textit{20 newsgroups} dataset~\cite{20newsgroups}.
We represent each document as a vector in which each entry represents a term or a combination of 2 terms (bigrams), and is associated with a value that encodes term/bigram importance using TF-IDF weights~\cite{SaltonWongYang:1975}.
This setting is well-known for generating sparse vectors of very high dimension.
As a similarity measure, we use the \textit{cosine}, which is equal to an inner product when the vectors have are normalized.
We sampled 700 documents 
and estimated the cosine similarity for 
over 200,000 pairs of documents. 
%
The results in \cref{fig:tfidf} show that, similar to previous experiments, \emph{in the worst case, the accuracy of WMH is comparable to the other methods,  but it can sometimes be better by a large margin.} In this case, it performs better for documents containing more than 700 words.
%
Note that linear projection sketches have poor performance for small sketches even when the documents are small, whereas our sampling-based methods are able to obtain significantly better accuracy for the same storage budget.
Finally,  also note that the Unweighted MinHash (MH) performs poorly for long documents whereas the weighted version still performs well.

\myparagraph{Acknowledgements}
This work was supported by the DARPA D3M program and NSF awards ISS-2106888 and CCF-2046235.  Aline Bessa was supported by a 2021 CRA/CCC CIFellows Award. Cameron Musco was also supported by a Google Research Scholar Award.
Any opinions, findings, conclusions or recommendations expressed in this material are those of the authors and do not necessarily reflect the views of NSF, DARPA, or other funding organizations.

\bibliographystyle{apalike}
\bibliography{paper}

\appendix
\section{Additional Proofs}
\label{app:add_proof}

\subsection{Unweighted MinHash Analysis}
\label{app:unweighted_proof}
\setcounter{theorem}{3}
In this section, we give a full proof of \Cref{thm:bounded}.
\vspace{-.5em}
\begin{proof}[Proof of \Cref{thm:bounded}]
	Let $\overline{\mathcal{F}}(H_{\bv{a}}, H_{\bv{b}})$ denote the estimator from  \Cref{alg:minhash_est}. Ultimately we will set $\mathcal{F}$ in \Cref{thm:bounded} to be $\overline{\mathcal{F}}$, but repeated $O(\log(1/\delta)$ times to obtain failure probability $1-\delta$. 
	
	We focus on showing first that $\overline{\mathcal{F}}(H_{\bv{a}}, H_{\bv{b}})$ achieves error $\epsilon \cdot c^2 \cdot \sqrt{\max(|\mathcal{A}|, |\mathcal{B}|)\cdot |\mathcal{A}\cap \mathcal{B}|}$ with probability $\geq 2/3$. 
	To prove this, let $\mathcal{F}^*(H_{\bv{a}}, H_{\bv{b}})$ be an alternative idealized estimator where we replace $\tilde U$ in line 1 of \Cref{alg:minhash_est} with the true union size $U = |\mathcal{A}\cup \mathcal{B}|$:
	\begin{align*}
		\mathcal{F}^*(H_{\bv{a}}, H_{\bv{b}}) =  \frac{U}{m} \sum_{i=1}^m \mathbbm{1}\left[ H_{\bv{a}}^{hash}[i] =H_{\bv{b}}^{hash}[i]\right] \cdot H_{\bv{a}}^{val}[i]\cdot H_{\bv{b}}^{val}[i].
	\end{align*}
	We will first analyze $\mathcal{F}^*$, before showing that $\overline{\mathcal{F}}$ obtains essentially as good of an estimate.
	As established in \Cref{sec:unweighted-minhash}, using the properties of \Cref{fact:mhash_sketch}, we have that
	\begin{align*}
		&\E\left[\mathcal{F}^*(H_{\bv{a}}, H_{\bv{b}})\right] = U\cdot \frac{1}{|\mathcal{A}\cup \mathcal{B}|}\cdot \langle \bv{a}, \bv{b}\rangle = \langle \bv{a}, \bv{b}\rangle.
	\end{align*}
	So we turn to bounding the variance of the estimator. Define the random variable  $Z_i = \mathbbm{1}\left[H_{\bv a}^{hash}[i] = H_{\bv b}^{hash}[i]\right]\cdot H_{\bv a}^{val}[i] \cdot H_{\bv b}^{val}[i]$ and note that $\mathcal{F}^*(H_{\bv{a}}, H_{\bv{b}}) = \frac{U}{m}\sum_{i=1}^m Z_i$.
	From  \Cref{fact:mhash_sketch} we have:
	\begin{align*}
		Z_i  =  \begin{cases}
			0 &\text{with probability } 1 - \frac{|\mathcal{A}\cap \mathcal{B}|}{|\mathcal{A}\cup \mathcal{B}|}\\
			\bv{a}[j] \bv{b}[j] &\text{with probability } \frac{1}{|\mathcal{A}\cup \mathcal{B}|} \text{ for all } j \in \mathcal{A}\cap \mathcal{B}.
		\end{cases}
	\end{align*}
	Since each $Z_i$ is independent, we can bound:
	\begin{align*}
		\Var\left[\mathcal{F}^*(H_{\bv{a}}, H_{\bv{b}}) \right] = \frac{U^2}{m^2} \sum_{i=1}^m\Var\left[Z_i \right].
	\end{align*}
	Using our assumption that $\bv{a}[k],\bv{b}[k]\leq c$ for all $k$, we have
	\begin{align*}
		\Var\left[Z_i\right] \leq \E\left[Z_i^2\right] = \sum_{j\in \mathcal{A}\cap\mathcal{B}} \frac{1}{|\mathcal{A}\cup \mathcal{B}|}  \cdot \bv{a}[j]^2 \bv{b}[j]^2 \leq c^4 \cdot \frac{|\mathcal{A}\cap \mathcal{B}|}{|\mathcal{A}\cup \mathcal{B}|},
	\end{align*}
 for all $Z_i$. 
	So we conclude that 
	$
		\Var\left[\mathcal{F}^*(H_{\bv{a}}, H_{\bv{b}}) \right] \leq  \frac{1}{m} \cdot c^4 \cdot |\mathcal{A}\cap \mathcal{B}||\mathcal{A}\cup \mathcal{B}|.
	$
	We then plug our expectation and variance bounds into Chebyhev's inequality. If $m = O(1/\epsilon^2)$, we conclude that with probability $\geq 5/6$,
	\begin{align}
		\label{eq:union1_unw}
		\left |\mathcal{F}^*(H_{\bv{a}}, H_{\bv{b}}) - \langle \bv{a},\bv{b}\rangle\right | \leq \epsilon \cdot c^2 \sqrt{|\mathcal{A}\cap \mathcal{B}||\mathcal{A}\cup \mathcal{B}|}.
	\end{align}
	
	The proof is almost complete; we just need to extend this bound to the non-idealized estimator $\overline{\mathcal{F}} = \frac{\tilde{U}}{U} \cdot \mathcal{F}^*$. We do so by observing that $\tilde{U}$ is a good approximation to $U$. Specifically, by \Cref{lem:distinct_elem} applied with $\delta = 1/6$, we have that, when $m = O(1/\epsilon^2)$, $(1-\epsilon) U \leq \tilde{U} \leq (1+\epsilon) U$,
	with probability  $\ge 5/6$.
	It follows that
	\begin{align}
		\label{eq:union2_unw}
		(1-\epsilon) \mathcal{F}^*(H_{\bv a},H_{\bv b})\leq{\overline{\mathcal{F}}(H_{\bv a},H_{\bv b})} \leq (1+\epsilon)\mathcal{F}^*(H_{\bv a},H_{\bv b}).
	\end{align}
	By a union bound, with probability at least $2/3$, both \eqref{eq:union1_unw} and \eqref{eq:union2_unw} hold simultaneously. Finally, by triangle inequality and the fact that $\langle a,b\rangle \leq c^2 |\mathcal{A}\cap \mathcal{B}| \leq c^2 \sqrt{|\mathcal{A}\cap \mathcal{B}||\mathcal{A}\cup \mathcal{B}|}$ it follows that:
	\begin{align*}
		|\overline{\mathcal{F}}(H_{\bv a},H_{\bv b})- \langle \bv{a},\bv{b}\rangle | \leq 3\epsilon\cdot c^2\cdot  \sqrt{|\mathcal{A}\cap \mathcal{B}||\mathcal{A}\cup \mathcal{B}|}.
	\end{align*}
	Noting that	$|\mathcal{A}\cap \mathcal{B}||\mathcal{A}\cup \mathcal{B}| \leq 2 \max(|\mathcal{A}|, |\mathcal{B}|)\cdot |\mathcal{A}\cap \mathcal{B}|$ and adjusting $\epsilon$ by a constant factor, we thus have that when $m = O(1/\epsilon^2)$, $\overline{\mathcal{F}}(H_{\bv a},H_{\bv b})$ satisfies the guarantee of \Cref{thm:bounded} with probability at least $ 2/3$. 
	To boost success probability to $1-\delta$, we can use the exact same median-trick used in the proof of \Cref{thm:main}: instead of computing a single pair of sketches $H_\bv{a}, H_{\bv b}$ for inputs $\bv{a},\bv{b}$, we concatenate $O(\log(1/\delta))$ sketches, each constructed using an independent random seed. If we apply $\overline{\mathcal{F}}$ to each pair of independent sketches and return the median estimate for $\langle \bv a,\bv b\rangle$, with probability at least $1-\delta$, it will satisfy our desired guarantee. 
\end{proof}

\subsection{Weighted MinHash Analysis}
\label{app:weighted_proof}
In this section we complete the analysis of \Cref{alg:weight_est} introduced in Section \ref{sec:main_result}, which yields our main result, \Cref{thm:main}. We start with a formal proof of \Cref{fact:mhash_sketchWeighted}, which is the weighted analog of \Cref{fact:mhash_sketch}.
\vspace{-.25em}
\begin{proof}[Proof of \Cref{fact:mhash_sketchWeighted}]
	Let $\bar{\mathcal A} = \{i: \bv{\bar a}[i] \neq 0\}$  and $\bar{\mathcal B} = \{i: \bv{\bar b}[i] \neq 0\}$. Since $\bv{\bar a},\bv{\bar b}$ are each comprised of $n$ blocks of $L$ elements, with the first $\bv{\tilde a}[i]^2 \cdot L$ entries and $\bv{\tilde b}[i]^2 \cdot L$ entries in the $i^\text{th}$ block set to be nonzero, we have the following equalities:
	\begin{align}
		| \bar{\mathcal{A}} \cap \bar{\mathcal{B}}| = L \cdot \sum_{j =1}^n \min(\tilde{\bv a}[j]^2,\tilde{\bv b}[j]^2)\\
		\label{eq:weightedUnion}
		| \bar{\mathcal{A}} \cup \bar{\mathcal{B}}| = L \cdot \sum_{j =1}^n \max(\tilde{\bv a}[j]^2,\tilde{\bv b}[j]^2).
	\end{align}
	Since $\bv{W}_{\bv a}^{hash}[i]$ and $\bv{W}_{\bv b}^{hash}[i]$ are constructed exactly as unweighted MinHash sketches of $\bv{\bar a}, \bv{\bar b}$, by claim (1) of \Cref{fact:mhash_sketch}, $\bv{W}_{\bv a}^{hash}[i] = \bv{W}_{\bv b}^{hash}[i]$ with probability $\frac{| \bar{\mathcal{A}} \cap \bar{\mathcal{B}}|}{| \bar{\mathcal{A}} \cup \bar{\mathcal{B}}|} = \bar J$. This gives claim (1).
	
	To prove claim (2) we note that it is equivalent to claiming that, unconditional on whether or not $\bv{W}_{\bv a}^{hash}[i] = \bv{W}_{\bv b}^{hash}[i]$, $W_{\bv a}^{val}= \bv{\tilde a}[j]$ and $W_{\bv b}^{val}= \bv{\tilde b}[j]$ for some shared $j \in \mathcal{A} \cap \mathcal{B}$ with probability $\frac{\min(\tilde{\bv a}[j]^2,\tilde{\bv b}[j]^2)}{\sum_{i=1}^n \max(\tilde{\bv a}[j]^2,\tilde{\bv b}[j]^2)}$. To prove this statement, 
	we use that, by \Cref{fact:mhash_sketch}, for any $\ell \in \mathcal{\bar A} \cap \mathcal{\bar B}$, $W_{\bv a}^{hash}[i] = W_{\bv{b}}^{hash}[i] = h^i(\ell)$, $\bv{W}_{\bv a}^{val}[i] = \bv{\bar a}[\ell]$, and $\bv{W}_{\bv b}^{val}[i] = \bv{\bar b}[\ell]$ with probability $\frac{1}{| \bar{\mathcal{A}} \cup \bar{\mathcal{B}}|} =  \frac{1}{L\sum_{k=1}^n \max(\tilde{\bv a}[k]^2,\tilde{\bv b}[k]^2)}.$ 
	Now, by construction (line 3 of \cref{alg:weighted_sketch}), $\bv{\bar a}[\ell] = \bv{\tilde a}[j]$ and $\bv{\bar b}[\ell] = \bv{\tilde b}[j]$ whenever $\ell$ lies in the $j^\text{th}$ length $L$ block of entries in $\bv{\bar a}$. For a given $j$, the number of values of $\ell$ for which $\bv{\bar a}[\ell] = \bv{\tilde a}[j]$, $\bv{\bar b}[\ell] = \bv{\tilde b}[j]$ is exactly $L \cdot \min(\bv{\tilde a}[j]^2, \bv{\tilde b}[j]^2)$. Thus, summing over these entries, $W_{\bv a}^{hash}[i] = W_{\bv{b}}^{hash}[i]$, $\bv{W}_{\bv a}^{val}[i] = \bv{\tilde a}[j]$, and $\bv{W}_{\bv b}^{val}[i] = \bv{\tilde b}[j]$ with probability $\frac{\min(\bv{\tilde a}[j]^2, \bv{\tilde b}[j]^2)}{\sum_{k=1}^n \max(\tilde{\bv a}[k]^2,\tilde{\bv b}[k]^2)}$. 
\end{proof}

\myparagraph{Analysis for Discrete Vectors}
Next, as a step towards proving \cref{thm:main}, we prove a restricted intermediate result, \Cref{lem:wminhash}, that only applies to vectors whose entries, after scaling to be unit norm, are already integer multiplies of $1/L$ for a fixed discretization parameter $L$. When this is the case, the \textsc{Round} procedure in \Cref{alg:weighted_sketch} is no-op: it simply returns $\bv{a}/\|\bv{a}\|$ unmodified. Making this assumption simplifies our analysis. Later we introduce a rounding error analysis to obtain a result for arbitrary vectors.
\begin{lemma}
	\label{lem:wminhash}
	Consider any integer discretization parameter $L$, accuracy parameter $\epsilon \in (0,1)$, and $\bv a, \bv b \in \R^n$ such that for all $i$, $\frac{\bv{ a}[i]^2}{\norm{\bv a}^2}$ and $\frac{\bv{ b}[i]^2}{\norm{\bv b}^2}$ are integer multiples of $1/L$. When run with sample size $m = O\left({1}/{\epsilon^2}\right)$ and discretization parameter $L$, \Cref{alg:weighted_sketch} returns sketches $W_{\bv a}$ and $W_{\bv b}$ such that,  letting $\mathcal{F}$ denote the estimation procedure of \Cref{alg:weight_est}, with probability at least $2/3$, 
	\begin{align*}
		|\mathcal{F}(W_{\bv a},W_{\bv b})- \langle \bv{a},\bv{b} \rangle| \leq \epsilon \max\left(\|\bv a_{\mathcal I}\|\|\bv b\|, \|\bv a\|\|\bv b_{\mathcal I}\| \right).
	\end{align*}
	Here $\mathcal{I} = \{i: \bv{a}[i] \neq 0\text{ and } \bv{b}[i] \neq 0\}$ is the intersection of $\bv a$'s  and $\bv{b}$'s supports and $\bv{a}_{\mathcal{I}}, \bv{b}_{\mathcal{I}}$ denote $\bv{a}$ and $\bv{b}$ restricted to indices in $\mathcal{I}$.
\end{lemma}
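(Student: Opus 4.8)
The plan is to mirror the proof of \Cref{thm:bounded} exactly: first analyze an idealized estimator $\mathcal{F}^*$ in which the true weighted union size is known, bound its bias and variance, and then control the extra error incurred by estimating that union size. Since the hypothesis guarantees that $\bv a/\norm{\bv a}$ and $\bv b/\norm{\bv b}$ already have squared entries that are integer multiples of $1/L$, the \textsc{Round} step of \Cref{alg:weighted_sketch} is a no-op, so $\tilde{\bv a} = \bv a/\norm{\bv a}$ and $\tilde{\bv b} = \bv b/\norm{\bv b}$ are genuine unit vectors. Writing $M = \sum_{j=1}^n \max(\tilde{\bv a}[j]^2,\tilde{\bv b}[j]^2)$ for the weighted union size, I would set $\mathcal{F}^* = \norm{\bv a}\norm{\bv b}\cdot \frac{M}{m}\sum_{i=1}^m Z_i$ where $Z_i = \ind{W_{\bv a}^{hash}[i] = W_{\bv b}^{hash}[i]}\cdot \frac{W_{\bv a}^{val}[i]W_{\bv b}^{val}[i]}{q_i}$. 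The expectation computation already carried out in \Cref{sec:main_result}, together with $\langle \bv a,\bv b\rangle = \norm{\bv a}\norm{\bv b}\langle \tilde{\bv a},\tilde{\bv b}\rangle$, shows that $\mathcal{F}^*$ is unbiased.

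The heart of the argument is the variance bound, and this is where the weighting pays off. By \Cref{fact:mhash_sketchWeighted}, $Z_i$ equals $\tilde{\bv a}[j]\tilde{\bv b}[j]/\min(\tilde{\bv a}[j]^2,\tilde{\bv b}[j]^2)$ with probability $\min(\tilde{\bv a}[j]^2,\tilde{\bv b}[j]^2)/M$ for each $j\in\mathcal I$, and is $0$ otherwise. The crucial algebraic simplification is that the inverse-probability weight collapses the second moment down to a maximum:
\[
\E[Z_i^2] = \frac{1}{M}\sum_{j\in\mathcal I}\frac{\tilde{\bv a}[j]^2\tilde{\bv b}[j]^2}{\min(\tilde{\bv a}[j]^2,\tilde{\bv b}[j]^2)} = \frac{1}{M}\sum_{j\in\mathcal I}\max(\tilde{\bv a}[j]^2,\tilde{\bv b}[j]^2) \le \frac{\norm{\tilde{\bv a}_{\mathcal I}}^2 + \norm{\tilde{\bv b}_{\mathcal I}}^2}{M}.
\]
Since the $Z_i$ are i.i.d., $\Var[\mathcal{F}^*] = \norm{\bv a}^2\norm{\bv b}^2\frac{M^2}{m}\Var[Z_1] \le \norm{\bv a}^2\norm{\bv b}^2\frac{M}{m}\bigl(\norm{\tilde{\bv a}_{\mathcal I}}^2 + \norm{\tilde{\bv b}_{\mathcal I}}^2\bigr)$. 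Here I would invoke that $\tilde{\bv a},\tilde{\bv b}$ are unit vectors to bound $M = \sum_j\max(\tilde{\bv a}[j]^2,\tilde{\bv b}[j]^2)\le \norm{\tilde{\bv a}}^2+\norm{\tilde{\bv b}}^2 = 2$, and the identities $\norm{\bv a}^2\norm{\tilde{\bv a}_{\mathcal I}}^2 = \norm{\bv a_{\mathcal I}}^2$ and $\norm{\bv b}^2\norm{\tilde{\bv b}_{\mathcal I}}^2 = \norm{\bv b_{\mathcal I}}^2$, which together give $\Var[\mathcal{F}^*]\le \frac{4}{m}\max(\norm{\bv a_{\mathcal I}}^2\norm{\bv b}^2, \norm{\bv a}^2\norm{\bv b_{\mathcal I}}^2)$. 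Chebyshev's inequality with $m = O(1/\epsilon^2)$ then yields $|\mathcal{F}^* - \langle\bv a,\bv b\rangle| \le \epsilon\max(\norm{\bv a_{\mathcal I}}\norm{\bv b},\norm{\bv a}\norm{\bv b_{\mathcal I}})$ with probability at least $5/6$.

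To pass from $\mathcal{F}^*$ to the actual estimator $\mathcal{F}$, I would use that $M = \frac{1}{L}|\bar{\mathcal A}\cup\bar{\mathcal B}|$ by \eqref{eq:weightedUnion}, so \Cref{lem:distinct_elem} applied to the expanded index sets (with $\delta = 1/6$) certifies that the quantity $\tilde M$ of \Cref{alg:weight_est} satisfies $(1-\epsilon)M \le \tilde M \le (1+\epsilon)M$ with probability $5/6$. Since $\mathcal{F} = (\tilde M/M)\mathcal{F}^*$, this gives $|\mathcal{F} - \mathcal{F}^*| \le \epsilon|\mathcal{F}^*|$. A union bound places us in the intersection of both good events with probability $\ge 2/3$; bounding $|\mathcal{F}^*| \le |\langle\bv a,\bv b\rangle| + \epsilon\max(\cdots)$ and using $|\langle\bv a,\bv b\rangle| = |\langle\bv a_{\mathcal I},\bv b_{\mathcal I}\rangle| \le \norm{\bv a_{\mathcal I}}\norm{\bv b_{\mathcal I}} \le \max(\cdots)$ (only intersection indices contribute to the inner product), the triangle inequality produces an error of $O(\epsilon)\max(\norm{\bv a_{\mathcal I}}\norm{\bv b},\norm{\bv a}\norm{\bv b_{\mathcal I}})$. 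Rescaling $\epsilon$ by a constant finishes the proof.

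I expect the only real obstacle to lie in correctly setting up and simplifying the variance: recognizing that the importance weight $1/q_i = 1/\min(\tilde{\bv a}[j]^2,\tilde{\bv b}[j]^2)$ converts the per-sample second moment into $\max(\tilde{\bv a}[j]^2,\tilde{\bv b}[j]^2)$, and that unit-normalization forces the weighted union size $M \le 2$. These two observations are precisely what replace the crude $c^4$ and $|\mathcal A\cup\mathcal B|$ factors from the unweighted analysis and are what deliver the intersection-sensitive bound; everything else follows the structure of \Cref{thm:bounded}.
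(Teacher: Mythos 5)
Your proposal is correct and follows essentially the same route as the paper's proof: an idealized estimator $\mathcal{F}^*$ with the exact weighted union size $M$, unbiasedness plus the key collapse $\tilde{\bv a}[j]^2\tilde{\bv b}[j]^2/\min(\tilde{\bv a}[j]^2,\tilde{\bv b}[j]^2) = \max(\tilde{\bv a}[j]^2,\tilde{\bv b}[j]^2)$ in the second moment, Chebyshev with $m = O(1/\epsilon^2)$, the distinct-elements guarantee (\Cref{lem:distinct_elem}) applied to the expanded supports to control $\tilde M/M$, and a union bound plus triangle inequality using Cauchy--Schwarz on $\langle \bv a, \bv b\rangle$. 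The only differences are cosmetic --- you fold $\|\bv a\|\|\bv b\|$ into $\mathcal{F}^*$ from the start and invoke $M \le 2$ inside the variance bound, whereas the paper works with the unit vectors $\tilde{\bv a},\tilde{\bv b}$ throughout and converts to the norm-based bound in a separate reduction step.
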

Note that \Cref{lem:wminhash} is also weaker than \Cref{thm:main} in that it only gives an accurate solution with \emph{constant probability}, $2/3$, instead of $1-\delta$ probability for any chosen $\delta$. This is again to simplify the analysis and later we show how the standard ``median-trick'' can be used to improve the success probability to $1-\delta$  \cite{CormodeGarofalakisHaas:2011,LarsenPaghTetek:2021}.

\begin{proof}
	As stated, since $\bv{ a}/\norm{\bv a}$ and $\bv{ b}/\norm{\bv b}$ have squared entries that are integer multiples of $1/L$ by assumption, in line 2 of \cref{alg:weighted_sketch}, $\textsc{Round}(\bv{a}/\norm{\bv{a}},L)$ simply sets $\bv{\tilde a} =  \bv{a}/\norm{\bv{a}}$. Analogously it sets $\bv{\tilde b} = \bv{b}/\norm{\bv{b}}$. Let $\mathcal{A} = \{i: \bv a[i] \neq 0\}$ and $\mathcal{B} = \{i: \bv b[i] \neq 0\}$ denote the supports of $\bv a$ and $\bv b$ respectively. We have $\mathcal I = \mathcal A \cap \mathcal B$.
	
	\medskip
	
	\noindent{\textbf{Reduction to Unit Vectors.}}  We first note that, to prove  the theorem, it suffices to only consider the inner product between the unit vectors $\bv{\tilde a}$ and $\bv{\tilde b}$. Specifically, we will show that:
	\begin{align}
		\label{eq:what_to_prove}
		&\left |\frac{\mathcal{F}(W_{\bv a},W_{\bv b})}{\|{\bv a}\|\|\bv b\|}- \langle \bv{\tilde{a}},\bv{\tilde{b}} \rangle\right | \\
		&\hspace{2em} \leq  \epsilon \sqrt{\sum_{i\in \mathcal{A}\cap \mathcal{B}}\max(\bv{\tilde{a}}[i]^2,\bv{\tilde{b}}[i]^2) \sum_{i=1}^n  \max(\bv{\tilde{a}}[i]^2,\bv{\tilde{b}}[i]^2) }\nonumber.
	\end{align}
	Using that $\|\bv{\tilde{a}}\|^2 + \|\bv{\tilde{b}}\|^2 = 2$ since $\bv{\tilde a}, \bv{\tilde b}$ are unit vectors, we have:
	\begin{align*}
		&\sqrt{\sum_{i\in \mathcal{A}\cap \mathcal{B}}\max(\bv{\tilde{a}}[i]^2,\bv{\tilde{b}}[i]^2) \sum_{i=1}^n \max(\bv{\tilde{a}}[i]^2,\bv{\tilde{b}}[i]^2) }\\ 
		&\hspace{2em}\leq \sqrt{\left(\|\bv{\tilde{a}}_{\mathcal{I}}\|^2 + \|\bv{\tilde{b}}_{\mathcal{I}}\|^2\right) \left(\|\bv{\tilde{a}}\|^2 + \|\bv{\tilde{b}}\|^2\right)} \\&\hspace{2em}= \sqrt{2\left(\|\bv{\tilde{a}}_{\mathcal{I}}\|^2 + \|\bv{\tilde{b}}_{\mathcal{I}}\|^2\right)} = \sqrt{2\left(\frac{\|{\bv a}_{\mathcal{I}}\|^2}{\|\bv a\|^2} + \frac{\|{\bv b}_{\mathcal{I}}\|^2}{\|\bv b\|^2} \right)}.
	\end{align*}
	Thus,
	multiplying \eqref{eq:what_to_prove} on both sides by $\|\bv a\|\|\bv b\|$ we  have: 
	\begin{align*}
		\left|\mathcal{F}(W_{\bv a},W_{\bv b})- \langle \bv{{a}},\bv{{b}} \rangle\right|  &\le \epsilon\sqrt{2} \|\bv a\|\|\bv b\| \cdot \sqrt{\frac{\|{\bv a}_{\mathcal{I}}\|^2}{\|\bv a\|^2} + \frac{\|{\bv b}_{\mathcal{I}}\|^2}{\|\bv b\|^2}}\\
		&=  \epsilon\sqrt{2}\sqrt{\|{\bv a}_{\mathcal I}\|^2\|\bv b\|^2 + \|{\bv b}_{\mathcal{I}}\|^2\|\bv a\|^2 }\\
		& \leq {2}\epsilon \cdot \max\left(\|{\bv a}_{\mathcal I}\|\|\bv b\|, \|{\bv b}_{\mathcal{I}}\|\|\bv a\| \right).
	\end{align*}
	The last inequality follows from the fact that the sum is at most two times the max.
	Adjusting $\epsilon$ by a constant gives the desired bound of \cref{lem:wminhash}.
	Thus, we turn our attention to proving  \eqref{eq:what_to_prove}. 
	
	\medskip 
	
	\noindent \textbf{Analysis for Unit Vectors.} 
	We start by analyzing an idealized version of the estimator computed by \cref{alg:weight_est}, where $M$ is replaced by the \emph{exact} weighted union size $M = \sum_{i=1}^n \max(\tilde{\bv a}[i]^2,\tilde{\bv b}[i]^2).$ Specifically, define:
	\begin{align}
		\label{eq:ideal_estimate}
		\mathcal{F}^* = \frac{M}{m} \sum_{i=1}^m \mathbbm{1}\left[W_{\bv a}^{hash}[i] = W_{\bv b}^{hash}[i]\right]\cdot \frac{W_{\bv a}^{val}[i] \cdot W_{\bv b}^{val}[i]}{q_i},
	\end{align}
	where $q_i = \min\left(W_{\bv{a}}^{val}[i]^2,W_{\bv{b}}^{val}[i]^2\right)$ as  in line 1 of \cref{alg:weight_est}.
	
	We first show that $\E[\mathcal{F}^*] = \langle \tilde{\bv a}, \tilde{\bv b}\rangle$ and then bound $\mathcal{F}^*$'s variance. 
	For each $i \in \{1,\ldots,m\}$ define the random variable $Z_i$ as
	\begin{align*}
		Z_i =\mathbbm{1}\left[W_{\bv a}^{hash}[i] = W_{\bv b}^{hash}[i]\right]\cdot \frac{W_{\bv a}^{val}[i] \cdot W_{\bv b}^{val}[i]}{q_i}.
	\end{align*}
	Recalling that $\bar{J} = \frac{\sum_{j=1}^n \min(\bv{\tilde{a}}[j]^2,\tilde{\bv b}[j]^2)}{\sum_{j=1}^n \max(\bv{\tilde{a}}[j]^2,\tilde{\bv b}[j]^2)}$ is the weighted Jaccard similarity between $\bv{\tilde{a}}$ and $\bv{\tilde{b}}$, applying \Cref{fact:mhash_sketchWeighted} we have:
	\begin{align*}
		Z_i  =  \begin{cases}
			0 &\text{with probability } 1 - \bar{J}\\
			\frac{\tilde{\bv a}[j] \tilde{\bv b}[j]}{\min(\tilde{\bv a}[j]^2,\tilde{\bv b}[j]^2)} &\text{with probability } \frac{\min(\tilde{\bv a}[j]^2,\tilde{\bv b}[j]^2)}{\sum_{k=1}^n \max(\tilde{\bv a}[k]^2,\tilde{\bv b}[k]^2)}\\ &\text{for all } j \in \mathcal{A} \cap \mathcal{B}.
		\end{cases}
	\end{align*}
	Thus, $\E[Z_i] =\frac{\langle \tilde{\bv a},\tilde{\bv b}\rangle}{\sum_{k = 1}^n \max(\tilde{\bv a}[k]^2,\tilde{\bv b}[k]^2)} = \frac{\langle \tilde{\bv a},\tilde{\bv b}\rangle}{M}$. Since $\mathcal{F}^* =\frac{M}{m}\sum_{i=1}^m Z_i$, it follows from linearity of expectation that:
	\begin{align}
		\label{eq:weighted_expectation}
		\E[\mathcal{F}^*] =  \frac{M}{m}  \sum_{i=1}^m \E\left[Z_i\right] = \langle \bv{\tilde a}, \bv{\tilde b}\rangle.
	\end{align}
	We next bound the variance of $\mathcal{F}^*$. For each $Z_i$ we have that: 
	\begin{align*}
		\Var&[Z_i] \leq \sum_{j \in \mathcal A \cap \mathcal B} \frac{\min(\tilde{\bv a}[j]^2,\tilde{\bv b}[j]^2)}{\sum_{k=1}^n \max(\tilde{\bv a}[k]^2,\tilde{\bv b}[k]^2)} \cdot \frac{\tilde{\bv a}[j]^2 \tilde{\bv b}[j]^2}{\min(\tilde{\bv a}[j]^2,\tilde{\bv b}[j]^2)^2}\\ &= \sum_{j \in \mathcal A \cap \mathcal B} \frac{\max(\tilde{\bv a}[j]^2,\tilde{\bv b}[j]^2)}{\sum_{k=1}^n \max(\tilde{\bv a}[k]^2,\tilde{\bv b}[k]^2)} \\
		& =\frac{\sum_{j\in \mathcal{A}\cap \mathcal{B}} \max(\tilde{\bv a}[j]^2,\tilde{\bv b}[j]^2)}{\sum_{k=1}^n \max(\tilde{\bv a}[k]^2,\tilde{\bv b}[k]^2)} = \frac{\sum_{j\in \mathcal{A}\cap \mathcal{B}} \max(\tilde{\bv a}[j]^2,\tilde{\bv b}[j]^2)}{M}
	\end{align*} 
	Since each $Z_i$ is independent, it follows that:
	\begin{align}
		\label{eq:weighted_variance}
		\Var[&\mathcal{F}^*] = \frac{M^2}{m^2}\sum_{i=1}^m \Var\left[Z_i\right]\nonumber\\
		&\le \frac{1}{m}\sum_{j\in \mathcal{A}\cap \mathcal{B}} \max(\tilde{\bv a}[j]^2,\tilde{\bv b}[j]^2) \cdot  \sum_{j=1}^n \max(\tilde{\bv a}[j]^2,\tilde{\bv b}[j]^2) . 
	\end{align}
	Combining \eqref{eq:weighted_expectation} and \eqref{eq:weighted_variance} with Chebyshev's inequality, we can claim that when $m = O(1/\epsilon^2)$, with probability at least $5/6$:
	\begin{align}\label{eq:union1}
		\left|\mathcal{F}^*- \langle \tilde{\bv a},\tilde{\bv b} \rangle\right| \leq \epsilon \sqrt{\sum_{j\in \mathcal{A}\cap \mathcal{B}} \max(\tilde{\bv a}[j]^2,\tilde{\bv b}[j]^2)  \sum_{j=1}^n \max(\tilde{\bv a}[j]^2,\tilde{\bv b}[j]^2) }.
	\end{align}
	
	We want to extend this bound from the idealized estimator $\mathcal{F}^*$ to our true estimator $\mathcal{F}$, 
	which equals $\frac{\tilde{M}}{M}\cdot \mathcal{F}^*$. 
	To do so, we use that  $\tilde{M}$ is a good approximation to $M$. 
	As discussed in \Cref{sec:main_result}, this is because $\tilde{M}$ exactly equals $\frac{1}{L}$ times a distinct elements estimator applied to the support sets $\bar{\mathcal{A}}$ and $\bar{\mathcal{B}}$ 
	of the extended vectors $\bv{\bar a}, \bv{\bar b}$. From \eqref{eq:weightedUnion} and  \Cref{lem:distinct_elem}, we have that for $m = O(1/\epsilon^2)$,
	\begin{align*}
		(1-\epsilon) M \leq \tilde{M} \leq (1+\epsilon) M,
	\end{align*}
	with probability at least $5/6$.
	It follows that:
	\begin{align}\label{eq:union2}
		(1-\epsilon) \mathcal{F}^*\leq\frac{\mathcal{F}(W_{\bv a},W_{\bv b})}{\|{\bv a}\|\|\bv b\|}  \leq (1+\epsilon)\mathcal{F}^*.
	\end{align}
	By a union bound, with probability at least $2/3$, both \eqref{eq:union1} and \eqref{eq:union2} hold simultaneously. Finally, by Cauchy-Schwarz inequality,
	\begin{align*}\langle \tilde{\bv a},\tilde{\bv b}\rangle \leq \sqrt{\sum_{j\in \mathcal{A}\cap \mathcal{B}} \max(\tilde{\bv a}[j]^2,\tilde{\bv b}[j]^2)  \sum_{j=1}^n \max(\tilde{\bv a}[j]^2,\tilde{\bv b}[j]^2) }.
	\end{align*}
	Combining \eqref{eq:union1} and \eqref{eq:union2} with triangle inequality, it follows that
	\begin{align*}
		&\left|\frac{\mathcal{F}(W_{\bv a},W_{\bv b})}{\|{\bv a}\|\|\bv b\|} - \langle \tilde{\bv a},\tilde{\bv b}\rangle\right| \\&\hspace{2em}\leq 3\epsilon \sqrt{\sum_{j\in \mathcal{A}\cap \mathcal{B}} \max(\tilde{\bv a}[j]^2,\tilde{\bv b}[j]^2)  \sum_{j=1}^n \max(\tilde{\bv a}[j]^2,\tilde{\bv b}[j]^2) }.
	\end{align*}
	Adjusting  $\epsilon$ by a $1/3$ factor proves \Cref{lem:wminhash}.
\end{proof}

\myparagraph{Rounding for Continuous Vectors}
With \Cref{lem:wminhash} in place, we complete our proof of \Cref{thm:main} by analyzing the impact of the rounding step in \Cref{alg:weight_est}. In \Cref{lem:round}, we show that if $L$ is set on the order of $n^6/\epsilon^2$, then we can bound the impact of this step on the accuracy of our inner product estimate. 
Formally, we have:
\begin{lemma}[Rounding]\label{lem:round}
	Consider any $\bv{a}, \bv{b} \in \R^n$ and discretization parameter $L$. Let $\bv{\tilde a} = \textsc{Round}(\bv{a}/\norm{\bv a},L)$ and $\bv{\tilde b} = \textsc{Round}(\bv{b}/\norm{\bv b},L)$, as in line 2 of \cref{alg:weighted_sketch}. Let $\bv{a}' = \norm{\bv a} \cdot \bv{\tilde a}$ and $\bv{b}' = \norm{\bv b} \cdot \bv{\tilde b}$, and let $B$ denote $B =\max\left(\|\bv a_{\mathcal I}\|\|\bv b\|, \|\bv a\|\|\bv b_{\mathcal I}\| \right).$
	\begin{enumerate}
		\item $\bv{a}',\bv{b}'$ satisfy the assumption of \Cref{lem:wminhash}, that for all $i$, $\frac{\bv{ a}'[i]^2}{\norm{\bv a'}^2}$ and $\frac{\bv{ b}'[i]^2}{\norm{\bv b'}^2}$ are integer multiples of $1/L$.
		\item For any discretization parameter $L$, sketch size $m$, and random seed $s$, \Cref{alg:weighted_sketch} yields identical outputs on $\bv{a},\bv{b}$ and $\bv{a}', \bv{b}'$. I.e., $W_{\bv a} = W_{\bv a'}$ and $W_{\bv b} = W_{\bv b'}$.
		\item For $L \geq 9n^6/\epsilon^2$,  $\left |\langle \bv a,\bv b\rangle - \langle \bv a',\bv b'\rangle \right |\le \epsilon B.$
		\item For $L \geq n^3$,  $\max\left(\|\bv a'_{\mathcal I}\|\|\bv b'\|, \|\bv a'\|\|\bv b'_{\mathcal I}\| \right) \le 2 B.$
	\end{enumerate}
\end{lemma}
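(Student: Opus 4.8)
The plan is to treat the two structural claims (1)--(2) as bookkeeping and concentrate the real work on a single per-coordinate estimate, from which claims (3)--(4) both follow. Throughout I write $\bv u = \bv a/\norm{\bv a}$ and $\bv v = \bv b/\norm{\bv b}$, so that $\bv{\tilde a} = \textsc{Round}(\bv u,L)$, $\bv a' = \norm{\bv a}\bv{\tilde a}$, and the per-coordinate error is $\Delta_a[i] := \bv a'[i]-\bv a[i] = \norm{\bv a}(\bv{\tilde a}[i]-\bv u[i])$ (and symmetrically for $\bv b$). Since \textsc{Round} returns a unit vector, $\norm{\bv a'} = \norm{\bv a}$ exactly, a fact I use repeatedly.

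For claim (1), after line 1 of \Cref{alg:round} every squared entry is an integer multiple of $1/L$ and $\norm{\bv{\tilde z}}^2 = \frac1L\sum_i \lfloor \bv z[i]^2 L\rfloor$ is an integer multiple of $1/L$; since $L$ is an integer and $\norm{\bv z}=1$, the correction $\delta = 1-\norm{\bv{\tilde z}}^2$ is a nonnegative integer multiple of $1/L$ as well, so adding it to $\bv{\tilde z}[i^*]^2$ keeps that entry a multiple of $1/L$. Because $\norm{\bv a'}=\norm{\bv a}$, we get $\bv a'[i]^2/\norm{\bv a'}^2 = \bv{\tilde a}[i]^2$, proving the claim. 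Claim (2) is then immediate: $\bv a'/\norm{\bv a'} = \bv{\tilde a}$ already has squared entries that are multiples of $1/L$, so \textsc{Round} is idempotent on it ($\delta=0$, the fix-up is a no-op), and every later step of \Cref{alg:weighted_sketch} depends only on $\bv{\tilde a}$, the stored norm $\norm{\bv a'}=\norm{\bv a}$, and the seed $s$; hence $W_{\bv a}=W_{\bv a'}$.

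The heart of the argument is the coordinate-wise bound $|\Delta_a[i]| \le |\bv a[i]|$, valid for every $i$ once $L \ge n^3$. I would split into three cases. If $\bv u[i]^2 < 1/L$ the entry is zeroed, so $\Delta_a[i] = -\bv a[i]$ and the bound is an equality. For a non-maximal entry with $\bv u[i]^2 \ge 1/L$, rounding only decreases the magnitude, so $|\bv u[i]|-|\bv{\tilde a}[i]| = (\bv u[i]^2-\bv{\tilde a}[i]^2)/(|\bv u[i]|+|\bv{\tilde a}[i]|) < (1/L)/|\bv u[i]| \le |\bv u[i]|$, whence $|\Delta_a[i]| \le |\bv a[i]|$. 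The delicate case is the rounded-\emph{up} maximal entry $i^*$: here $|\bv{\tilde a}[i^*]^2-\bv u[i^*]^2| < n/L$ since the total deficit redistributed by $\delta$ is at most $n/L$, and since $i^*$ maximizes $|\bv u[i]|$ in a unit vector we have $\bv u[i^*]^2 \ge 1/n$, giving $|\bv{\tilde a}[i^*]-\bv u[i^*]| \le (n/L)/|\bv u[i^*]| \le |\bv u[i^*]|$ once $L \ge n^2$. This special entry is the main obstacle: it is the one place the nonstandard ``round the largest coordinate up'' rule is used, and it is what forces a polynomial lower bound on $L$. Summing squares over the same three cases also yields the cruder bound $\norm{\bv{\tilde a}-\bv u}\le \sqrt{n/L}$, i.e. $\norm{\Delta_a}\le \norm{\bv a}\sqrt{n/L}$ for $L\ge n^3$; restricting to any index set $S$ then gives both $\norm{\Delta_{a,S}}\le \norm{\bv a_S}$ and $\norm{\Delta_{a,S}}\le \norm{\bv a}\sqrt{n/L}$.

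With these estimates claims (3)--(4) are short. For claim (4), expand $\bv a'_{\mathcal I} = \bv a_{\mathcal I}+\Delta_{a,\mathcal I}$ and apply the triangle inequality with $\norm{\Delta_{a,\mathcal I}}\le \norm{\bv a_{\mathcal I}}$ to get $\norm{\bv a'_{\mathcal I}}\le 2\norm{\bv a_{\mathcal I}}$; together with $\norm{\bv b'}=\norm{\bv b}$ and the symmetric statement this gives the factor $2$. For claim (3), note first that $\mathrm{supp}(\bv a')\subseteq \mathrm{supp}(\bv a)$ and $\mathrm{supp}(\bv b')\subseteq \mathrm{supp}(\bv b)$, so both inner products are sums over $\mathcal I$ and $\langle \bv a,\bv b\rangle - \langle \bv a',\bv b'\rangle = -\langle \bv a_{\mathcal I},\Delta_{b,\mathcal I}\rangle - \langle \Delta_{a,\mathcal I},\bv b_{\mathcal I}\rangle - \langle \Delta_{a,\mathcal I},\Delta_{b,\mathcal I}\rangle$. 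I would bound the two linear terms by Cauchy--Schwarz with $\norm{\Delta_{b,\mathcal I}}\le \norm{\bv b}\sqrt{n/L}$ and $\norm{\Delta_{a,\mathcal I}}\le \norm{\bv a}\sqrt{n/L}$, giving $\sqrt{n/L}\,(\norm{\bv a_{\mathcal I}}\norm{\bv b}+\norm{\bv a}\norm{\bv b_{\mathcal I}})$. The cross term is where care is needed: bounding it by $\norm{\bv a}\norm{\bv b}(n/L)$ would be useless since $\norm{\bv a}\norm{\bv b}$ can dwarf the target, so instead I use the \emph{relative} bound $\norm{\Delta_{a,\mathcal I}}\le \norm{\bv a_{\mathcal I}}$ on one factor and the $\sqrt{n/L}$ bound on the other, giving $\norm{\bv a_{\mathcal I}}\norm{\bv b}\sqrt{n/L}$. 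Altogether $|\langle \bv a,\bv b\rangle - \langle \bv a',\bv b'\rangle| \le 3\sqrt{n/L}\,\max(\norm{\bv a_{\mathcal I}}\norm{\bv b},\norm{\bv a}\norm{\bv b_{\mathcal I}})$, which is at most $\epsilon\max(\norm{\bv a_{\mathcal I}}\norm{\bv b},\norm{\bv a}\norm{\bv b_{\mathcal I}})$ as soon as $L\ge 9n^6/\epsilon^2$ (indeed $L \ge 9n/\epsilon^2$ already suffices, the stated bound being deliberately loose).
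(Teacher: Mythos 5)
Your proof is correct, and while your Claims (1)--(2) coincide with the paper's argument, your route through Claims (3)--(4) is genuinely different. The paper never isolates a per-coordinate \emph{relative} error bound; instead it bounds the rounding error on coordinates of $\mathcal I$ in $\ell_\infty$ (by $n/\sqrt{L}$), combines this with an $\ell_1$-norm Cauchy--Schwarz step, and then needs explicit case analyses: for Claim (3) it splits on whether $\max\left(\norm{\bv{\hat a}_{\mathcal I}},\norm{\bv{\hat b}_{\mathcal I}}\right) \ge 1/\sqrt{L}$ (the degenerate case, where every coordinate in $\mathcal I$ is rounded to zero, is handled by a separate direct Cauchy--Schwarz estimate), and for Claim (4) it splits on whether $i^* \in \mathcal I$. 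Your single inequality $|\Delta_a[i]| \le |\bv a[i]|$ --- valid once $L \ge n^2$, using the round-the-maximum-up rule together with $\bv{\hat a}[i^*]^2 \ge 1/n$ --- absorbs both degenerate cases (it is exactly tight on zeroed coordinates), after which Claim (4) is a one-line triangle inequality and Claim (3) follows from the trilinear expansion; the key additional idea is bounding the cross term $\langle \Delta_{a,\mathcal I},\Delta_{b,\mathcal I}\rangle$ with the relative estimate on one factor and the absolute $\sqrt{n/L}$ estimate on the other, which is precisely what avoids the useless $\norm{\bv a}\norm{\bv b}\cdot(n/L)$ bound. What your route buys is modularity, a unified treatment of the paper's case splits, and visibility that the $n^6$ in the lemma statement is far from necessary; the paper's route avoids needing the relative bound at the maximal coordinate but pays for it with the two case analyses. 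Two small inaccuracies that do not affect validity: summing your three cases actually gives $\norm{\Delta_a} \le \norm{\bv a}\sqrt{2n/L}$ rather than $\sqrt{n/L}$ (a harmless constant, given the slack in $L \ge 9n^6/\epsilon^2$), and your parenthetical remark that $L \ge 9n/\epsilon^2$ ``already suffices'' is slightly off: your own maximal-entry and cross-term bounds require $L \ge n^2$ (you invoke $n^3$), so the correct refined statement is $L \ge \max\left(n^3, Cn/\epsilon^2\right)$ for a modest constant $C$.
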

\begin{proof}
	We prove the four claims of the lemma in order. For the first two, we focus on $\bv{a}$ and $\bv{a}'$. Identical claims hold for $\bv{b}$ and $\bv{b}'$.
		\smallskip
		
		\noindent\textbf{Claim 1:  $\frac{\bv{ a}'[i]^2}{\norm{\bv a'}^2}$ is an integer multiple of $1/L$ for all $i$}.  First observe that $\bv{\tilde a} = \textsc{Round}(\bv{a}/\norm{\bv a},L)$ is a unit vector. This is ensured by line 3 of \cref{alg:round}. Thus, $\norm{\bv{a}'} = \norm{\bv a} \cdot \norm{\bv{\tilde a}} = \norm{\bv a}$ and $\frac{\bv{a}'[i]^2}{\norm{\bv{a}'}^2} = \frac{\bv{a}'[i]^2}{\norm{\bv a}^2} = \bv{\tilde a}[i]^2$. 
		So to prove the claim, it suffices to show that $\bv{\tilde a}[i]^2$ is an integer multiple of $1/L$ for all $i$. This is guaranteed by \cref{alg:round}. After line 1, we can see that $\bv{\tilde z}[i]^2$ is an integer multiple of $1/L$ for all $i$. Since $L$ is an integer, $1$ is also trivially an integer multiple of $1/L$. So $\delta = 1-\norm{\bv{\tilde z}}^2$ as set in line 2 is an integer multiple of $1/L$. Finally, this ensures that $\bv{\tilde z}[i^*]^2 = \bv{\tilde z}[i^*]^2 +\delta$ as set in line 3 is an integer multiple of $1/L$, completing the claim.
		
		\medskip
		
		\noindent\textbf{Claim 2:} $W_{\bv a}=W_{\bv a'}$. As shown above, $\norm{\bv a'} = \norm{\bv a}$. So to prove the claim, it  suffices to show that $\textsc{Round}\left (\frac{\bv{a}}{\|\bv{a}\|}, L\right) = \textsc{Round} \left (\frac{\bv{a}'}{\|\bv{a}'\|}, L\right )$. This ensures that \cref{alg:weighted_sketch} proceeds identically on inputs $\bv a$ and $\bv a'$. By Claim (1), $\textsc{Round}(\bv{a}'/\|\bv{a}'\|, L) = \bv{a}'/\norm{\bv a'} = \bv{a}'/\norm{\bv a} = \bv{\tilde a}$. And by definition, $\bv{\tilde a} = \textsc{Round}(\bv{a}/\|\bv{a}\|, L)$. This completes the claim.
		\medskip
		
		\noindent\textbf{Claim 3:} For $L \geq 9n^6/\epsilon^2$, $\left |\langle \bv a,\bv b\rangle - \langle \bv a',\bv b'\rangle \right |\le \epsilon B.$
		%
		%
		Let $\bv{\hat a} = \bv{a}/\norm{\bv a}$ and $\bv{\hat b} = \bv{b}/\norm{\bv b}$. So $\bv{\tilde a} = \textsc{Round}(\bv{\hat a},L)$ and $\bv{\tilde b} = \textsc{Round}(\bv{\hat b},L)$.
		We will show that
		\begin{align}\label{eq:roundImpact}
			\left | \langle \bv{\hat a},\bv{\hat b}\rangle - \langle \bv{\tilde a},\bv{\tilde b} \rangle \right | \le \epsilon \cdot \sqrt{\|\bv{\hat{a}}_{\mathcal{I}}\|^2 + \|\bv{\hat{b}}_{\mathcal{I}}\|^2}.
		\end{align} 
		Multiplying each side of \eqref{eq:roundImpact} by $\norm{\bv a} \norm{\bv b}$ then gives:
		\begin{align*}
			\left | \langle \bv{ a},\bv{ b}\rangle - \langle \bv{ a}',\bv{ b}' \rangle \right | &\le \epsilon \cdot \norm{\bv a} \norm{\bv b} \sqrt{\|\bv{\hat{a}}_{\mathcal{I}}\|^2 + \|\bv{\hat{b}}_{\mathcal{I}}\|^2} \\
			&= \epsilon \cdot \norm{\bv a} \norm{\bv b} \sqrt{\left(\frac{\|\bv{{a}}_{\mathcal{I}}\|^2}{\norm{\bv a}^2} + \frac{\|\bv{{b}}_{\mathcal{I}}\|^2}{\norm{\bv b}^2}\right)}\\
			&=  \epsilon\sqrt{\left(\|{\bv a}_{\mathcal I}\|^2\|\bv b\|^2 + \|{\bv b}_{\mathcal{I}}\|^2\|\bv a\|^2 \right)}\\
			& \leq \sqrt{2}\epsilon \cdot \max\left(\|{\bv a}_{\mathcal I}\|^2\|\bv b\|^2, \|{\bv b}_{\mathcal{I}}\|^2\|\bv a\|^2 \right),
		\end{align*}
		which completes the claim after adjusting $\epsilon$ by a constant.
		
		We proceed to prove \eqref{eq:roundImpact}. 
		Observe that for any $i \notin \mathcal{I}$, we have at least one of $\bv{\hat a}[i]$ or $\bv{\hat b}[i]$ equal to $0$. In turn, at least one of $\bv{\tilde a}[i]$ or $\bv{\tilde b}[i]$ is also $0$ since in the rounding procedure of \cref{alg:round} any entry of $\bv{z}$ that is $0$ is set to $0$ in $\bv{\tilde z}$. 
		So we can conclude that $\langle \bv{\hat a},\bv{\hat b} \rangle = \langle \bv{\hat a}_{\mathcal I},\bv{\hat b}_{\mathcal I} \rangle$ and similarly, $\langle \bv{\tilde a},\bv{\tilde b} \rangle = \langle \bv{\tilde a}_{\mathcal I},\bv{\tilde b}_{\mathcal I} \rangle$. This gives that:
		\begin{align*}
			\left | \langle \bv{\hat a},\bv{\hat b}\rangle - \langle \bv{\tilde a},\bv{\tilde b} \rangle \right | = \left | \langle \bv{\hat a}_{\mathcal{I}},\bv{\hat b}_{\mathcal{I}} \rangle - \langle \bv{\tilde a}_{\mathcal I},\bv{\tilde b}_{\mathcal{I}}\rangle \right |.
		\end{align*}
		So, to prove \eqref{eq:roundImpact}, it suffices to bound the righthand side of the above equation. We  consider two cases:
		
		\smallskip
		
		\noindent\textbf{Case 1: $\max \left (\norm{\bv{\hat a}_{\mathcal I}},\norm{\bv{\hat b}_{\mathcal I}} \right ) \ge \frac{1}{\sqrt{L}}$.}
		For $i \in \mathcal{I}$, if $|\bv{\hat a}[i]| < \frac{1}{\sqrt{L}}$ and $L \ge n$, then $|\bv{\hat a}[i]| < \frac{1}{\sqrt{n}}$ and so $i \neq \argmax_{i \in 1,\ldots,n} \bv{\hat a}[i]$ since $\bv{\hat a}$ is a unit vector so has at least one entry with magnitude $\ge 1/\sqrt{n}$. Thus, $\bv{\hat a}[i]$ is rounded in line 1 of \cref{alg:round}, and not in line 3. We have $\lfloor \bv{\hat a}[i]^2 \cdot L \rfloor = 0$ and so 
		$|\tilde{\bv a}[i] - \bv{\hat a}[i] | = |\hat{\bv a}[i]| < \frac{1}{\sqrt{L}}$. Alternatively, if $|\bv{\hat a}[i]| \ge \frac{1}{\sqrt{L}}$ and $i \neq \argmax_{i \in 1,\ldots,n} \bv{\hat a}[i]$ (so $\bv{\hat a}[i]$ is rounded in line 1 but not line 3 of \cref{alg:round}) then:
		\begin{align*}
			|\tilde{\bv a}[i] - \bv{\hat a}[i] | &\le \frac{1}{\sqrt{L}} \cdot \left |\sqrt{\bv{\hat a}[i]^2 \cdot L} - \sqrt{\bv{\hat a}[i]^2 \cdot L-1} \right | \\&= \frac{1}{\sqrt{L}} \cdot \frac{1}{\sqrt{\bv{\hat a}[i]^2 \cdot L} + \sqrt{\bv{\hat a}[i]^2 \cdot L-1}} \le \frac{1}{\sqrt{L}}.
		\end{align*}
		If $i = \argmax_{i \in 1,\ldots,n} \bv{\hat a}[i]$ then $\bv{\hat a}[i]$ is rounded in line 3 and so
		\begin{align}\label{eq:deltaBound}
			|\tilde{\bv a}[i] - \bv{\hat a}[i] | \le \left | \sqrt{\bv{\hat a}[i]^2 + \delta} - |\bv{\hat a}[i]| \right | \le \frac{\delta}{2 |\bv{\hat a}[i]|} ,
		\end{align}
		where we use that $\sqrt{x}$ is concave with derivative $\frac{1}{{2 |\bv{\hat a}[i]|}}$ at $\bv{\hat a}[i]^2$. In line 2 of \cref{alg:round} we set $\delta = 1 - \norm{\bv{\tilde a}}^2$, where $\bv{\tilde a}$ is formed by rounding down entries of $\bv{\hat a}$ in line 1. Each squared entry is rounded down by at most $1/L$, so recalling that $\bv{\hat a}$ is a unit vector, $\delta \le n/L$.
		Plugging into \eqref{eq:deltaBound}, and recalling that we assume $\bv{\hat a}[i] \ge 1/\sqrt{L}$,
		\begin{align}\label{eq:deltaBound2}
			|\tilde{\bv a}[i] - \bv{\hat a}[i] | \le \frac{n/L}{2/ \sqrt{L}} \le \frac{n}{\sqrt{L}}.
		\end{align}
		Overall, we can conclude that $\norm{\bv{\tilde a}_{\mathcal{I}}-\bv{\hat a}_{\mathcal I}}_\infty \le \frac{n}{\sqrt{L}}$. 
		Similarly, we have $\norm{\bv{\tilde b}_{\mathcal I}- \bv{\hat b}_{\mathcal I}}_\infty \le \frac{n}{\sqrt{L}}$. Thus,
		\begin{align*}
			\left | \langle \bv{\tilde a},\bv{\tilde b}\rangle - \langle \bv{\hat a}, \bv{\hat b} \rangle \right | &= \left | \langle \bv{\tilde a}_{\mathcal{I}},\bv{\tilde b}_{\mathcal{I}} \rangle - \langle \bv{\hat a}_{\mathcal I},\bv{\hat b}_{\mathcal{I}}\rangle \right | \\&\le \frac{n}{\sqrt{L}} \left (\norm{\bv{\hat a}_{\mathcal I}}_1 + \norm{\bv{\hat b}_{\mathcal I}}_1 \right ) + \frac{|\mathcal{I}| \cdot n^2}{L}.
		\end{align*}
		By Cauchy-Schwarz, we have $\norm{\bv{\hat a}_{\mathcal{I}}}_1 \le \sqrt{|\mathcal I|} \cdot \norm{\bv{\hat a}_{\mathcal{I}}}$ and $\norm{\bv{\hat b}_{\mathcal{I}}}_1 \le \sqrt{|\mathcal I|} \cdot \norm{\bv{\hat b}_{\mathcal{I}}}$. Overall, this gives:
		\begin{align*}
			\left | \langle \bv{\tilde a},\bv{\tilde b}\rangle - \langle \bv{\hat a}, \bv{\hat b} \rangle \right |  &\le \frac{n\sqrt{|\mathcal I|}}{\sqrt L} \left (\norm{\bv{\hat a}_{\mathcal{I}}} + \norm{\bv{\hat b}_{\mathcal{I}}} \right )+ \frac{{|\mathcal I|\cdot n^2}}{L}\\
			&\le \frac{n^3}{\sqrt{L}} \cdot \left ( \norm{\bv{\hat a}_{\mathcal{I}}} + \norm{\bv{\hat b}_{\mathcal{I}}} + \max \left (\norm{\bv{\hat a}_{\mathcal I}},\norm{\bv{\hat b}_{\mathcal I}} \right ) \right ),
		\end{align*}
		where in the last line we use that $|\mathcal I| \le n$, along with the assumption of Case 1 that $\max \left (\norm{\bv{\hat a}_{\mathcal I}},\norm{\bv{\hat b}_{\mathcal I}} \right ) \ge \frac{1}{\sqrt{L}}$.
		Setting $L \ge \frac{9n^6}{\epsilon^2}$, we have
		\begin{align*}
			\left | \langle \bv{\tilde a},\bv{\tilde b}\rangle - \langle \bv{\hat a},\bv{\hat b} \rangle \right | &\le \epsilon \cdot \max \left (\norm{\bv{\hat a}_{\mathcal I}},\norm{\bv{\hat b}_{\mathcal I}} \right ) \le \epsilon \cdot \sqrt{\|\bv{\hat{a}}_{\mathcal{I}}\|^2 + \|\bv{\hat{b}}_{\mathcal{I}}\|^2}.
		\end{align*}
		This proves \eqref{eq:roundImpact} for Case 1.
		
		\smallskip
		
		\noindent\textbf{Case 2: $\max \left (\norm{\bv{\hat a}_{\mathcal I}},\norm{\bv{\hat b}_{\mathcal I}} \right ) < \frac{1}{\sqrt{L}}$.} In this case, for all $i \in \mathcal{I}$,  $|\hat{\bv a}[i]| < \frac{1}{\sqrt{L}}$ and $|\hat{\bv b}[i]| < \frac{1}{\sqrt{L}}$. Thus, for $L > n$, no $i \in \mathcal{I}$ satisfies $i = \argmax_{i \in 1,\ldots,n} \bv{\hat a}[i]$ or $i = \argmax_{i \in 1,\ldots,n} \bv{\hat b}[i]$. So for all $i \in \mathcal{I}$, $\bv{\hat a}[i]$ and $\bv{\hat b}[i]$ are rounded to $0$ in line 1 of \cref{alg:round}. I.e.,
		$\tilde{\bv a}_{\mathcal{I}}$ and $\tilde{\bv b}_{\mathcal{I}}$ are both all zero vectors. So, to prove \eqref{eq:roundImpact}, we must show that 
		$
			\left | \langle \bv{\hat a}_{\mathcal{I}},\bv{\hat b}_{\mathcal I}\rangle \right | \le \epsilon \cdot \sqrt{\|\bv{\hat{a}}_{\mathcal{I}}\|^2 + \|\bv{\hat{b}}_{\mathcal{I}}\|^2}.
		$
		This follows from Cauchy-Schwarz and our assumption that $\norm{\bv a_{\mathcal I}},\norm{\bv b_{\mathcal I}} < \frac{1}{\sqrt{L}}$
		\begin{align*}
			\left | \langle \bv{\hat a}_{\mathcal{I}},\bv{\hat b}_{\mathcal I}\rangle \right | \le \norm{ \bv{\hat a}_{\mathcal{I}}} \norm{ \bv{\hat b}_{\mathcal{I}}} &\le \frac{1}{\sqrt{L}}\max\left (\norm{ \bv{\hat a}_{\mathcal{I}}}, \norm{ \bv{\hat b}_{\mathcal{I}}}\right )\\
			& \le \frac{1}{\sqrt{L}} \sqrt{\norm{\bv{\hat a}_{\mathcal{I}}}^2 +\norm{ \bv{\hat b}_{\mathcal{I}}}^2}.
		\end{align*}
Setting $L \ge \frac{1}{\epsilon^2}$ gives \eqref{eq:roundImpact}, completing Claim (3) of the lemma.
		
		\medskip
		
		\noindent\textbf{Claim 4:}  For $L \geq n^3$,  $\max\left(\|\bv a'_{\mathcal I}\|\|\bv b'\|, \|\bv a'\|\|\bv b'_{\mathcal I}\| \right) \le 2 B.$  Recall that by construction $\norm{\bv{a}'} = \norm{\bv{a}}$ and $\norm{\bv{b}'} = \norm{\bv{b}}$. Thus, dividing each side of the inequality by $\norm{\bv a} \norm{\bv b}$ it suffices to show:
		\begin{align*}
			\max\left(\frac{\|\bv a'_{\mathcal I}\|}{\norm{\bv a}}, \frac{\|\bv b'_{\mathcal I}\|}{\norm{\bv b}} \right) \le 2 \max\left(\frac{\|\bv a_{\mathcal I}\|}{\|\bv a\|}, \frac{\|\bv b_{\mathcal I}\|}{\|\bv b\|} \right).
		\end{align*}
		I.e., we must show that $
		\max(\norm{\bv{\tilde a}_{\mathcal I}}, \norm{\bv{\tilde b}_{\mathcal I}} ) \le 2 \max(\norm{\bv{\hat a}_{\mathcal I}}, \norm{\bv{\hat b}_{\mathcal I}}).
		$
		It suffices to show that $\norm{\bv{\tilde a}_{\mathcal{I}}} \le 2 \norm{\bv{\hat a}_{\mathcal I}}$ and that $\norm{\bv{\tilde b}_{\mathcal{I}}} \le 2 \norm{\bv{\hat b}_{\mathcal I}}$. We focus on proving this for $\bv{a}$. The bound for $\bv{b}$ follows  the same argument.
		We consider two cases. Let $i^* = \argmax_{i \in 1,\ldots,n} |\bv{\hat a}[i]|$.
		
		\smallskip
		
		\noindent\textbf{Case 1: $i^* \notin \mathcal{I}$.} In this case, all entries in $\bv{\hat a}_{\mathcal{I}}$ are only rounded in line 1 of \cref{alg:round}. They are thus all rounded down and so $\norm{\bv{\tilde a}_{\mathcal{I}}} \le \norm{\bv{\hat a}_{\mathcal I}}$, giving the claim.
		
		\smallskip
		
		\noindent\textbf{Case 2: $i^* \in \mathcal{I}$.} In this case, since $\bv{\hat a}$ is a unit vector, we have $\norm{\bv{\hat a}_\mathcal{I}} \ge |\bv{\hat a}[i^*]| \ge 1/\sqrt{n} \ge 1/\sqrt{L}$ when $L > n$.
		Further, all entries in $\bv{\hat a}_{\mathcal I}$ are rounded down, except $\bv{\hat a}[i^*]$. But as shown via \eqref{eq:deltaBound2}, $|\bv{\tilde a}[i^*]| \le |\bv{\hat a}[i^*]| + \frac{n}{\sqrt{L}}.$ Thus, $\norm{\bv{\tilde a}_{\mathcal I}} \le \norm{\bv{\hat a}_{\mathcal I}} + \frac{n}{\sqrt{L}} \le 2 \norm{\bv{\hat a}_{\mathcal I}}$, as long as $L \ge n^3$.
		
		This completes Claim (4) and thus the lemma.
\end{proof}

\myparagraph{Putting everything together}
Finally, we prove our main result by combining \Cref{lem:round} with \Cref{lem:wminhash}.
\begin{proof}[Proof of \cref{thm:main}]
	Given any $\bv a,\bv b \in \R^n$, let
	$\bv{a}'$ and $\bv{b}'$ be defined as in \cref{lem:round}. Consider applying \cref{alg:weighted_sketch} to compute sketches $W_{\bv a}, W_{\bv b}, W_{\bv a'}, W_{\bv b'}$ of size $m = O(1/\epsilon^2)$, using discretization parameter $L =O(n^6/\epsilon^2)$.
	Using the first claim of \Cref{lem:round} , we can apply \cref{lem:wminhash} to $\bv{a}',\bv{b}'$ to show that with probability $\ge 2/3$,
	\begin{align*}
		|\mathcal{F}(W_{\bv a'},W_{\bv b'})- \langle \bv a',\bv b' \rangle| \leq \epsilon \max\left(\|\bv a'_{\mathcal I}\|\|\bv b'\|, \|\bv a'\|\|\bv b'_{\mathcal I}\| \right).
	\end{align*} 
	Combining triangle inequality with Claims (2) and (4) of \cref{lem:round}, we conclude  that with probability $\ge 2/3$,
	\begin{align*}
		|\mathcal{F}(W_{\bv a},W_{\bv b})- \langle \bv a,\bv b \rangle| &\le \left |\langle \bv a,\bv b\rangle - \langle \bv a',\bv b'\rangle \right | \\ &\hspace{2em}+ 2 \epsilon  \max\left(\|\bv a_{\mathcal I}\|\|\bv b\|, \|\bv a\|\|\bv b_{\mathcal I}\| \right).
	\end{align*}
	Finally, applying Claim (3) of \cref{lem:round} gives that
	\begin{align*}
		|\mathcal{F}(W_{\bv a},W_{\bv b})- \langle \bv a,\bv b \rangle| \le 3\epsilon \max\left(\|\bv a_{\mathcal I}\|\|\bv b\|, \|\bv a\|\|\bv b_{\mathcal I}\| \right ).
	\end{align*}
	After adjusting $\epsilon$ by a factor of $1/3$, this establishes the bound of \cref{thm:main}. The probability of success is $2/3$. Using a standard trick, we can boost the success probability by computing $t = O(\log(1/\delta))$ independent sketches of $\bv a,\bv b$ using \cref{alg:weighted_sketch} with independent random seeds \cite{CormodeGarofalakisHaas:2011}. Call these sketches $W^{(1)}_{\bv a},\ldots, W^{(t)}_{\bv a}$ and $W^{(1)}_{\bv b},\ldots, W^{(t)}_{\bv b}$. For any $i$,  with probability $\ge 2/3$, 
	\begin{align*}
		|\mathcal{F}(W^{(i)}_{\bv a},W^{(i)}_{\bv b})- \langle \bv a,\bv b \rangle| \le \epsilon\max\left(\|\bv a_{\mathcal I}\|\|\bv b\|, \|\bv a\|\|\bv b_{\mathcal I}\| \right ).
	\end{align*} 
	Via a standard Chernoff bound, with probability at least $1-\delta$, this bound holds for $> t/2$ of the independent sketches. Thus, if we take the median estimate produced by the sketches, it will satisfy the desired bound with probability $\ge 1-\delta$. Concatenating our $t$ independent sketches into a single sketch,  we can see that the total sketch size is $t \cdot m = O(\log(1/\delta)/\epsilon^2)$, giving  \cref{thm:main}.
\end{proof}

\end{document}